\DeclareMathAlphabet{\mathpzc}{OT1}{pzc}{m}{it}
\begin{document}

%% ========================================================================== %%
%% ========================================================================== %%
\theoremstyle{plain}
\newtheorem{theorem}{Theorem}[section]
\newtheorem{lemma}[theorem]{Lemma}
\newtheorem{proposition}[theorem]{Proposition}
\newtheorem{claim}[theorem]{Claim}
\newtheorem{corollary}[theorem]{Corollary}
\newtheorem{axiom}{Axiom}

\theoremstyle{definition}
\newtheorem{remark}[theorem]{Remark}
\newtheorem{note}{Note}[section]
\newtheorem{definition}[theorem]{Definition}
\newtheorem{example}[theorem]{Example}
\newtheorem*{ackn}{Acknowledgements}
\newtheorem{assumption}{Assumption}
\newtheorem{approach}{Approach}
\newtheorem{critique}{Critique}
\newtheorem{question}{Question}
\newtheorem{aim}{Aim}
\newtheorem*{assucd}{Assumption ($\mathbb{CD}$)}
\newtheorem*{asa}{Assumption ($\mathbb{A}$)}
\newtheorem*{appS}{Approximation ($\mathbb{S}$)}
\newtheorem*{appBS}{Approximation ($\mathbb{BS}$)}
% \newtheorem*{asp}{Assumption ($\mathbb{P}$)}
% \newtheorem*{ass}{Assumption ($\mathbb{S}$)}
%% ========================================================================== %%
%% ========================================================================== %%
\renewcommand{\theequation}{\thesection.\arabic{equation}}
\numberwithin{equation}{section}

\renewcommand{\thefigure}{\thesection.\arabic{figure}}
\numberwithin{equation}{section}

\newcommand{\Law}{\ensuremath{\mathop{\mathrm{Law}}}}
\newcommand{\loc}{{\mathrm{loc}}}

\let\SETMINUS\setminus
\renewcommand{\setminus}{\backslash}

\def\stackrelboth#1#2#3{\mathrel{\mathop{#2}\limits^{#1}_{#3}}}

\makeatletter
\def\Ddots{\mathinner{\mkern1mu\raise\p@
\vbox{\kern7\p@\hbox{.}}\mkern2mu
\raise4\p@\hbox{.}\mkern2mu\raise7\p@\hbox{.}\mkern1mu}}
\makeatother

\newcommand\llambda{{\mathchoice
     {\lambda\mkern-4.5mu{\raisebox{.4ex}{\scriptsize$\backslash$}}}
     {\lambda\mkern-4.83mu{\raisebox{.4ex}{\scriptsize$\backslash$}}}
     {\lambda\mkern-4.5mu{\raisebox{.2ex}
{\footnotesize$\scriptscriptstyle\backslash$}}}
     {\lambda\mkern-5.0mu{\raisebox{.2ex}
{\tiny$\scriptscriptstyle\backslash$}}}}}

\newcommand{\prozess}[1][L]{{\ensuremath{#1=(#1_t)_{t\in[0,T]}}}\xspace}
\newcommand{\prazess}[1][L]{{\ensuremath{#1=(#1_t)_{t\ge0}}}\xspace}
\newcommand{\pt}[1][N]{\ensuremath{\P_{#1}}\xspace}
\newcommand{\tk}[1][N]{\ensuremath{T_{#1}}\xspace}
\newcommand{\dd}[1][]{\ensuremath{\ud{#1}}\xspace}

\newcommand{\scal}[2]{\ensuremath{\langle #1, #2 \rangle}}
\newcommand{\bscal}[2]{\ensuremath{\big\langle #1, #2 \big\rangle}}
\newcommand{\set}[1]{\ensuremath{\left\{#1\right\}}}

%\newcommand{\R}[1][\R]{\ensuremath{\R^{#1}}\xspace}
% \newcommand{\pt}[1][]{{\ensuremath{\P_{T^*_{#1}}}}\xspace}
% \newcommand{\ts}[1][]{\ensuremath{T^*_{#1}}\xspace}
%% ========================================================================== %%
%% ========================================================================== %%
\def\lev{L\'{e}vy\xspace}
\def\lk{L\'{e}vy--Khintchine\xspace}
\def\lib{LIBOR\xspace}
\def\mg{martingale\xspace}
\def\smmg{semimartingale\xspace}
\def\alm{affine LIBOR model\xspace}
\def\alms{affine LIBOR models\xspace}
\def\dalms{defaultable \alms}
\def\ap{affine process\xspace}
\def\aps{affine processes\xspace}

\def\half{\frac{1}{2}}

\def\F{\ensuremath{\mathcal{F}}}
\def\bD{\mathbf{D}}
\def\bF{\mathbf{F}}
\def\bG{\mathbf{G}}
\def\bH{\mathbf{H}}
\def\R{\ensuremath{\mathbb{R}}}
\def\Rp{\mathbb{R}_{\geqslant0}}
\def\Rm{\mathbb{R}_{\leqslant 0}}
\def\C{\ensuremath{\mathbb{C}}}
\def\U{\ensuremath{\mathcal{U}}}
\def\I{\mathcal{I}}
\def\N{\mathbb{N}}

\newcommand{\la}{\langle}
\newcommand{\ra}{\rangle}

\def\P{\ensuremath{\mathrm{I\kern-.2em P}}}
\def\Q{\mathbb{Q}}
\def\E{\ensuremath{\mathrm{I\kern-.2em E}}}

\def\hP{\ensuremath{\widehat{\mathrm{I\kern-.2em P}}}}
\def\hE{\ensuremath{\widehat{\mathrm{I\kern-.2em E}}}}

\def\bP{\ensuremath{\overline{\mathrm{I\kern-.2em P}}}}
\def\bE{\ensuremath{\overline{\mathrm{I\kern-.2em E}}}}

\def\bphi{\overline{\phi}}
\def\bpsi{\overline{\psi}}

\def\ott{{0\leq t\leq T}}
\def\idd{{1\le i\le d}}

\def\icc{\mathpzc{i}}
\def\ecc{\mathbf{e}_\mathpzc{i}}

\def\uk{u_{k+1}}
\def\vk{v_{k+1}}

\def\e{\mathrm{e}}
\def\ud{\ensuremath{\mathrm{d}}}
\def\dt{\ud t}
\def\ds{\ud s}
\def\dx{\ud x}
\def\dy{\ud y}
\def\dv{\ud v}
\def\dw{\ud w}
\def\dz{\ud z}
\def\dsdx{\ensuremath{(\ud s, \ud x)}}
\def\dtdx{\ensuremath{(\ud t, \ud x)}}

\def\lsnc{\ensuremath{\mathrm{LSNC-}\chi^2}}
\def\nc{\ensuremath{\mathrm{NC-}\chi^2}}

\def\red{\color{red}}
\def\blue{\color{blue}}
%% ========================================================================== %%
%% ========================================================================== %%
\newcommand{\cD}{{\mathcal{D}}}
\newcommand{\cF}{{\mathcal{F}}}
\newcommand{\cG}{{\mathcal{G}}}
\newcommand{\cH}{{\mathcal{H}}}
\newcommand{\cK}{{\mathcal{K}}}
\newcommand{\cM}{{\mathcal{M}}}
\newcommand{\cT}{{\mathcal{T}}}
\newcommand{\ha}{{\mathbb{H}}}
\newcommand{\indik}{{\mathbf{1}}}
\newcommand{\ifdefault}[1]{\ensuremath{\mathbf{1}_{\{\tau \leq #1\}}}}
\newcommand{\ifnodefault}[1]{\ensuremath{\mathbf{1}_{\{\tau > #1\}}}}

\newcommand\bovermat[2]{%
  \makebox[0pt][l]{$\smash{\overbrace{\phantom{%
    \begin{matrix}#2\end{matrix}}}^{#1}}$}#2}
\newcommand\bundermat[2]{%
  \makebox[0pt][l]{$\smash{\underbrace{\phantom{%
    \begin{matrix}#2\end{matrix}}}_{#1}}$}#2}
\makeatletter

%% ========================================================================== %%
%% ========================================================================== %%

\title[Approximate option pricing in the Lévy Libor model ]
    {Approximate option pricing in the Lévy Libor model}

\author[Z. Grbac]{Zorana Grbac}
\address{Z. Grbac - Laboratoire de Probabilit\'es et Mod\`eles Al\'eatoires, Universit\'e Paris Diderot, France}
\email{grbac@math.univ-paris-diderot.fr}

\author[D. Krief]{David Krief}
\address{D. Krief - Laboratoire de Probabilit\'es et Mod\`eles Al\'eatoires, Universit\'e Paris Diderot, France}
\email{krief@math.univ-paris-diderot.fr}

\author[P. Tankov]{Peter Tankov}
\address{P. Tankov - Laboratoire de Probabilit\'es et Mod\`eles Al\'eatoires, Universit\'e Paris Diderot, France}
\email{peter.tankov@polytechnique.org}

%\thanks{}

%\keywords{}
%\subjclass[2010]{}
% \newline\indent\emph{JEL classification.} }

\begin{abstract}
In this paper we consider the pricing of options on interest rates such as caplets
and swaptions in the Lévy Libor model developed by \citeN{EberleinOezkan05}. This model is an extension to Lévy driving processes of the
classical log-normal Libor market model (LMM) driven by a Brownian motion. Option
pricing is significantly less tractable in this model than in the LMM
due to the appearance of stochastic terms in the jump part of the
driving process when performing the measure changes which are standard
in pricing of interest rate derivatives. To obtain explicit
approximation for option prices, we propose to
treat a given Lévy Libor model as a suitable perturbation of the log-normal LMM. The method is inspired by recent works by \citeN{CernyDenklKallsen13}  and \citeN{MenasseTankov15}. The approximate option
prices in the Lévy Libor model are given as the corresponding LMM
prices plus correction terms which depend on the characteristics of
the underlying Lévy process and some additional terms obtained from
the LMM model.\\

\noindent Key words: Libor market model, caplet, swaption, Lévy Libor model, asymptotic approximation.\\

\end{abstract}

\date{\today}\maketitle\pagestyle{myheadings}\frenchspacing

\section{Introduction}

The goal of this paper is to develop explicit approximations for
option prices in the Lévy Libor model introduced by \citeN{EberleinOezkan05}. In particular, we shall be interested in price approximations for caplets, whose pay-off is a function of only one underlying Libor rate and swaptions, which can be regarded as options on a ``basket'' of multiple Libor rates of different maturities.

A full-fledged model of Libor rates such as the Lévy Libor model is
typically used for the purposes of pricing and risk management of
exotic interest rate products. The prices and hedge ratios must be
consistent with the market-quoted prices of liquid options, which
means that the model must be calibrated to the available prices /
implied volatilities of caplets and swaptions. To perform such a
calibration efficiently, one therefore needs explicit formulas or fast
numerical algorithms for caplet and swaption prices. 

Computation of option prices in the Lévy Libor model to arbitrary
precision is only possible via Monte Carlo. Efficient simulation
algorithms suitable for pricing exotic options have been proposed in \cite{kohatsu2010jump,papapantoleon12efficient},
however, these Monte Carlo algorithms are probably not an option for the purposes
of calibration because the computation is still too slow due to the
presence of both discretization and statistical error.

\citeN{EberleinOezkan05}, \citeN{Kluge05} and
\cite{belomestny2011jump} propose fast methods for
computing caplet prices which are based on Fourier transform
inversion and use the fact that the characteristic function of many
parametric Lévy processes is known explicitly. Since in the Lévy Libor
model, the Libor rate $L^k$ is not a geometric Lévy process under
the corresponding probability measure $\mathbb Q^{T_k}$, unless $k=n$ (see
Remark \ref{nolevy.rq} below for details), using
these methods for $k<n$ requires an additional approximation (some
random terms appearing in the compensator of the jump measure of $L^k$
are approximated by their values at time $t=0$, a method known as
freezing).

In this paper we take an alternative route and develop approximate
formulas for caplets and swaptions using asymptotic expansion
techniques. 
Inspired by methods used in \citeN{CernyDenklKallsen13} and
\citeN{MenasseTankov15} (see also
\cite{benhamou2009smart,benhamou2010time} for related expansions
``around a Black-Scholes proxy'' in other models), we consider a given Lévy Libor model as a
perturbation of the log-normal LMM. Starting from the
driving Lévy process $(X_t)_{t\geq 0}$ of the Lévy Libor model,
assumed to have zero expectation, we
introduce a family of processes $X^\alpha_t  = \alpha
X_{t/\alpha^2}$ parameterized by $\alpha\in (0,1]$, together with the
corresponding family of Lévy Libor models. For $\alpha=1$ one recovers
the original Lévy Libor model. When
$\alpha\to 0$, the family $X^\alpha$ converges weakly in Skorokhod
topology to a Brownian motion, and the option prices in the Lévy Libor model corresponding
to the process $X^\alpha$ converge to the prices in the log-normal LMM. The option prices in
the original Lévy Libor model can then be approximated by their
second-order expansions in the parameter $\alpha$, around the value $\alpha=0$. This leads to an asymptotic approximation formula for a derivative price expressed as a linear combination of the derivative price stemming from the LMM and correction terms depending on the characteristics of the driving Lévy process.
The terms of this expansion are often much easier to
compute than the option prices in the Lévy Libor model. In particular,
we shall see the expansion for caplets is expressed in terms of the derivatives of
the standard Black's formula, and the various terms of the
expansion for swaptions can be approximated using one of the many swaption
approximations for the log-normal LMM available in the literature. 

 This paper is structured as follows. In Section \ref{presentation} we
 briefly review the Lévy Libor model. %and recall the existing methods for computing option prices in this model. 
  In Section \ref{pde} we
 show how the prices of European-style options may be expressed as
 solutions of partial integro-differential equations (PIDE). These
 PIDEs form the basis of our asymptotic method, presented in detail in
 Section \ref{asymptotics}. Finally, numerical illustrations are
 provided in Section \ref{numerics}. 
\section{Presentation of the model}\label{presentation}

In this section we present a slight modification of the \lev Libor model by \citeN{EberleinOezkan05}, which is a generalization, based on \lev processes, of the Libor market model driven by a Brownian motion, introduced by \shortciteN{SandmannSondermannMiltersen95}, \shortciteN{BraceGatarekMusiela97} and \shortciteN{MiltersenSandmannSondermann97}.

Let a discrete tenor structure $0  \leq  T_{0} < T_{1} < \ldots < T_{n}$ be given, and set $\delta_k:= T_{k} -T_{k-1}$, for $k=1, \ldots, n$. We assume that zero-coupon bonds with maturities $T_{k}$, $k=0, \ldots, n$, are traded in the market. The time-$t$ price of a bond with maturity $T_k$ is denoted by $B_t(T_{k})$ with $B_{T_{k}}(T_{k})=1$.

For every tenor date $T_{k}$, $k=1, \ldots, n$, the forward Libor rate $L_t^k$ at time $t \leq T_{k-1}$ for the accrual period $[T_{k-1}, T_{k}]$ is a discretely compounded interest rate defined as
\begin{equation}
\label{default-free-libor-rate}
L_t^k:= \frac{1}{\delta_{k}}
  \left( \frac{B_t(T_{k-1})}{B_t(T_{k})} - 1\right).
\end{equation}
For all $t > T_{k-1}$, we set $L_t^k:= L_{T_{k-1}}^k$.

To set up the Libor model, one needs to specify the forward Libor rates $L_t^k$, $k=1, \ldots, n$, such that each Libor rate $L^k$ is a martingale with respect to the corresponding forward measure $\Q^{T_{k}}$ using the bond with maturity $T_k$ as numéraire.  We recall that the forward measures are interconnected via the Libor rates themselves and hence each Libor rate depends also on some other Libor rates as we shall see below. More precisely, assuming that the forward measure $\Q^{T_{n}}$ for the most distant maturity $T_n$ (i.e. with numéraire $B(T_n)$) is given, the link between the forward measure $\Q^{T_{k}}$ and $\Q^{T_n}$ is provided by 
\begin{align}
\label{eq:forward-measure}
\frac{d \Q^{T_k}}{d \Q^{T_n}} \Big|_{\mathcal{F}_t} &  = \frac{B_t(T_k)}{B_t(T_n)} \frac{B_0(T_n)}{B_0(T_k)}  = \prod_{j=k+1}^{n} \frac{1 + \delta_j L^j_t}{1 + \delta_j L^j_0},
\end{align}
for every $k=1, \ldots, n-1$. The forward measure $\Q^{T_{n}}$ is referred to as the terminal forward measure.

\subsection{The driving process}

Let us denote by $(\Omega, \cF, \bF=(\cF_{t})_{0\leq t \leq T^{*}},
\Q^{T_n})$ a complete stochastic basis and let $X$ be an $\R^d$-valued
Lévy process $(X_t)_{0 \leq t \leq T^{*}}$ on this stochastic basis
with Lévy measure $F$ and diffusion matrix $c$.   The filtration $\bF$ is generated by $X$ and
 $\Q^{T_n}$ is the forward measure associated with the date $T_n$,
 i.e. with the numeraire $B_t(T_n)$. The process $X$ is assumed
 without loss of generality to be driftless under $\Q^{T_n}$.

 Moreover, we assume that $\int_{|z| >1} |z| F(dz) < \infty$. This implies in addition that $X$ is a special semimartingale and allows to choose the truncation function $h(z)=z$, for $z \in \R^d$. The canonical representation of $X$ is given by 
 
\begin{equation}
\label{eq:driving-levy}
X_{t} =  \sqrt{c}  W_{t}^{T_{n}} + \int_{0}^{t} \int_{\R^d} z (\mu - \nu^{T_{n}})(ds, dz),
\end{equation}
where $W^{T_{n}}=(W^{T_{n}}_t)_{0 \leq t \leq T_n}$ denotes a standard $d$-dimensional Brownian motion with respect to the measure $\Q^{T_{n}}$, $\mu$ is the random measure of jumps of $X$ and $\nu^{T_{n}}(ds, dz)= F(dz) ds$ is the $\Q^{T_{n}}$-compensator of $\mu$.

\subsection{The model}

Denote by $L= (L^1, \ldots, L^n)^{\top}$ the column vector of  forward Libor rates. We assume that under the terminal measure $\Q^{T_n}$, the dynamics of $L$ is given by the following SDE 
\begin{equation}
\label{eq:Libor-terminal-measure}
d L_t = L_{t-} (b(t, L_{t}) dt + \Lambda(t) d X_t),
\end{equation}
where $b(t, L_{t})$ is the drift term and $\Lambda(t)$ a deterministic $n\times d$ volatility matrix. We write $\Lambda(t) = (\lambda^1(t), \ldots, \lambda^n(t))^{\top}$, where $\lambda^k(t)$ denotes the $d$-dimensional volatility vector of the Libor rate $L^k$ and  assuming that $\lambda^k(t) =0 $, for $t> T_{k-1}$. 

One typically assumes that the jumps of $X$ are bounded from below,
i.e. $\Delta X_t > C$, for all $t\in [0, T^*]$ and for some strictly
negative constant $C$, which is chosen such that it ensures the positivity of the Libor rates given by \eqref{eq:Libor-terminal-measure}.

The drift $b(t, L_{t}) = (b^{1}(t, L_{t}), \ldots, b^{n}(t, L_{t}) )$ is determined by the no-arbitrage requirement that $L^k$ has to be a martingale with respect to $\Q^{T_k}$, for every $k=1, \ldots, n$. This yields
\begin{align}
\label{eq:drift}
b^{k}(t, L_t) & = -  \sum_{j=k+1}^{n} \frac{\delta_j L_{t}^j}{1+ \delta_j L_{t}^j} \langle \lambda^k(t), c \, \lambda^j(t) \rangle \\
\notag & + \int_{\R^d} \langle \lambda^k(t), z \rangle \left( 1 - \prod_{j=k+1}^{n} \left( 1 + \frac{\delta_j L_{t}^j \langle \lambda^j(t), z \rangle} {1+ \delta_j L_{t}^j} \right)\right) F(dz).
\end{align}
The above drift condition follows from \eqref{eq:forward-measure} and Girsanov's theorem for semimartingales noticing that 
\begin{align*}
d L_t^k& = L_{t-}^k(b^{k}(t, L_{t}) dt + \lambda^k(t) d X_t) \\
& = L_{t-}^k \lambda^k(t) d X^{T_k}_t,
\end{align*}
where 
\begin{equation}
\label{eq:X-Tk}
X^{T_k}_t = \sqrt{c}W_t^{T_k} + \int_0^t \int_{\R^d} z (\mu - \nu^{T_k}) (ds, dz)
\end{equation}
is a special semimartingale with a $d$-dimensional $\Q^{T_{k}}$-Brownian motion $W^{T_k}$  given by
\begin{equation}
\label{TkBrownian}
d W_{t}^{T_{k}} := d W_{t}^{T_{n}}- \sqrt{c} \left( \sum_{j=k+1}^{n}\frac{\delta_j L_{t}^j}{1+ \delta_j L_{t}^j}   \lambda^j(t) \right) dt
\end{equation}
and the $\Q^{T_{k}}$-compensator $\nu^{T_k}$ of $\mu$ given by 
\begin{eqnarray}
\label{Tkcompensator}
\nu^{T_{k}}(dt, dz)& := &\prod_{j=k+1}^{n} \left( 1 + \frac{\delta_j L_{t-}^j }{1+ \delta_j L_{t-}^j}  \langle \lambda^j(t), z \rangle \right) \nu^{T_{n}} (dt, dz) \\
\nonumber & = & \prod_{j=k+1}^{n} \left( 1 + \frac{\delta_j L_{t}^j }{1+ \delta_j L_{t}^j}  \langle \lambda^j(t), z \rangle \right) F(dz) dt \\ 
\nonumber & = &  F^{T_k}_t(dz) dt 
\end{eqnarray}
with 
\begin{equation}
\label{Tk-Levy-measure}
F^{T_k}_t(dz) := \prod_{j=k+1}^{n} \left( 1 + \frac{\delta_j L_{t}^j }{1+ \delta_j L_{t}^j}  \langle \lambda^j(t), z \rangle \right) F(dz).
\end{equation}

Equalities \eqref{TkBrownian}  and \eqref{Tkcompensator}, and consequently also the drift condition  \eqref{eq:drift}, are implied by Girsanov's theorem for semimartingales applied first to the measure change from $\Q^{T_n}$ to $\Q^{T_{n-1}}$ and then proceeding backwards. We refer to \citeN[Proposition 2.6]{Kallsen06} for a version of Girsanov's theorem that can be directly applied in this case.  Note that the random terms $\frac{\delta_j L_{t}^j}{1+ \delta_j L_{t}^j} $ appear in the measure change due to the fact that for each $j=n, n-1, \ldots, 1$ we have  
\begin{equation}
\label{eq:aux}
d (1+\delta_j L_t^j) = (1+ \delta_j L_{t-}^j) \left(\frac{\delta_j L_{t-}^j}{1+ \delta_j L_{t-}^j} b^{j}(t, L_{t}) dt +  \frac{\delta_j L_{t-}^j}{1+ \delta_j L_{t-}^j} \lambda^j(t) d X_t\right),
\end{equation}
We point out that the predictable random terms $\frac{\delta_j L_{t-}^j}{1+ \delta_j L_{t-}^j} $ can be replaced with $\frac{\delta_j L_{t}^j}{1+ \delta_j L_{t}^j} $ in equalities  \eqref{eq:drift}, \eqref{TkBrownian} and \eqref{Tkcompensator}  due to absolute continuity of the characteristics of $X$.

Therefore, the vector process  of Libor rates $L$, given in \eqref{eq:Libor-terminal-measure}

with the drift \eqref{eq:drift}, is a time-inhomogeneous Markov
process and its infinitesimal generator under $\Q^{T_n}$ is given by
\begin{align}
\label{eq:generator}
\mathcal{A}_t f(x) & = \sum_{i=1}^{n} x_i b^i(t, x) \frac{\partial
  f(x)}{\partial x_i} + \frac{1}{2}  \sum_{i, j=1}^{n}x_i x_j (\Lambda(t) c \Lambda(t)^\top )_{ij} \frac{\partial f(x)}{\partial x_i \partial x_j} \\
\notag & + \int_{\R^d} \left( f(\text{diag}(x)(\mathbf{1}+\Lambda(t)
  z) ) - f(x) - \sum_{j=1}^{n} x_j (\Lambda(t) z)_j  \frac{\partial f(x)}{\partial x_j} \right) F(dz),
\end{align}
for a function $f \in C^2_0(\R^n, \R)$ and with the function $b^i(t, x)$, for $i=1, \ldots, n$ and $x=(x_1, \ldots, x_n) \in \R^n$, given by
\begin{align*}
b^{i}(t, x) & = -  \sum_{j=i+1}^{n} \frac{\delta_j x_j}{1+ \delta_j x_j} \langle \lambda^i(t), c \, \lambda^j(t) \rangle \\
\notag & + \int_{\R^d} \langle \lambda^i(t), z \rangle \left( 1 - \prod_{j=k+1}^{n} \left( 1 + \frac{\delta_j x_j \langle \lambda^j(t), z \rangle} {1+ \delta_j x_j} \right)\right) F(dz). 
\end{align*}

\begin{remark}[Connection to the Lévy Libor model of \citeN{EberleinOezkan05}]

The dynamics of the forward Libor rate $L^k$, for all $k=1, \ldots, n$, in the Lévy Libor model of \citeN{EberleinOezkan05} (compare also %\citeN{Eberlein2006c}
\citeN{EberleinKluge07}) is given as an ordinary exponential of the following form 
\begin{align}
L_t^k&  = L_0^k \exp \left( \int_0^t \tilde b^k(s, L_s) ds + \int_0^t \tilde \lambda^k(s) d \tilde Y_s \right),
\end{align}
for some deterministic volatility vector $\tilde \lambda^k$ and the drift $\tilde b^k(t, L_t) $ which has to be chosen such that the Libor rate $L^k$ is a martingale under the forward measure $\Q^{T_k}$. Here $\tilde Y$ is a $d$-dimensional Lévy process given by 
$$
\tilde Y_{t} =  \sqrt{c}  W_{t}^{T_{n}} + \int_{0}^{t} \int_{\R^d} z (\tilde \mu - \tilde \nu^{T_{n}})(ds, dz),
$$
with the $\Q^{T_n}$-characteristics $(0, c, \tilde F)$, where $\tilde \nu^{T_{n}}(ds, dz) = \tilde F(dz) ds$. The Lévy measure  $\tilde F$ has to satisfy the usual integrability conditions  ensuring the finiteness of the exponential moments. The dynamics of $L^k$ is thus given by the following SDE 
\begin{align*}
d L_t^k&  = L_{t-}^k \left(  b^k(t, L_t) dt +   \sqrt{c}  \tilde \lambda^k(t)  d W_{t}^{T_{n}}  + (e^{\langle \tilde  \lambda^k(t), z\rangle } -1 ) (\tilde \mu - \tilde \nu^{T_n}) (dt, dz)  \right) \\
& = L_{t-}^k \left(  b^k(t, L_t) dt  + d Y^k_t \right) ,
\end{align*}
for all $k$, where  $Y^k$ is a time-inhomogeneous Lévy process given by 
$$
Y^k_{t} =  \int_0^t \sqrt{c}  \tilde \lambda^k(s) d W_{s}^{T_{n}} + \int_{0}^{t} \int_{\R^d} (e^{\langle \tilde  \lambda^k(s), z\rangle } -1 ) (\tilde \mu - \tilde \nu^{T_{n}})(ds, dz)
$$
and the drift $b^k(t, L_t)$ is given by 
\begin{align*}
b^k(t, L_t) & = \tilde b^k(t, L_t) + \frac{1}{2} \langle \tilde \lambda^k(t), c \tilde \lambda^k(t) \rangle \\
& + \int_{\R^d} (e^{\langle \tilde  \lambda^k(t), z\rangle } -1 -  \langle \tilde  \lambda^k(t), z\rangle ) \tilde F ( dz). 
\end{align*}

\end{remark}

\section{Option pricing via PIDEs}
\label{pde}
Below we present the pricing PIDEs related to general option payoffs and then more specifically to caplets and swaptions. We price all options under the given terminal measure $\Q^{T_n}$.
\subsection{General payoff} Consider a European-type payoff with maturity $T_k$ given by $\xi=g(L_{T_k})$, for some tenor date $T_k$. Its time-$t$ price $P_t$ is given by the following risk-neutral pricing formula 
 \begin{align*}
 P_t & = B_t(T_k) \E^{\Q^{T_k}} [g(L_{T_k}) \mid \mathcal{F}_t]  \\
 & = B_t(T_n) \E^{\Q^{T_n}} \left[ \frac{B_{T_k}(T_k)}{B_{T}(T_n)} g(L_{T_k}) \mid \mathcal{F}_t\right]  \\
 & = B_t(T_n) \E^{\Q^{T_n}} \left[ \prod_{j=k+1}^{n} (1 + \delta_j L^j_{T_k}) g(L_{T_k}) \mid \mathcal{F}_t\right] \\
 & = B_t(T_n) u(t, L_t),
   \end{align*}
where $u$ is the solution of the following PIDE\footnote{A detailed
  proof of this statement is out of scope of this note. Here we simply
assume that Equation \eqref{eq:general-PIDE} admits a unique solution
which is sufficiently regular and is of polynomial growth. The
existence of such a solution may be established first by Fourier
methods for the case when there is no drift and then by a fixed-point
theorem in Sobolev spaces using the regularizing properties of the
Lévy kernel for the general case (see \cite[Chapter 7]{defranco.thesis} for
similar arguments). Once the existence of a regular solution has been
established, the expression for the option price follows by the standard
Feynman-Kac formula.}
 \begin{align}
 \label{eq:general-PIDE}
 \partial_t u + \mathcal{A}_t u  & = 0 \\
  \notag u(T_k, x) &   = \tilde g(x)
   \end{align}
and $\tilde g$ denotes the transformed payoff function given by 
$$
\tilde g(x) := \tilde g(x_1, \ldots, x_n) = \prod_{j=k+1}^{n} (1 + \delta_j x_j) g(x_1, \ldots, x_n).
$$ 

In what follows we shall in particular focus on two most liquid interest rate options: caps (caplets) and swaptions. 

\subsection{Caplet} Consider a caplet with strike $K$ and payoff $\xi=\delta_k (L^k_{T_{k-1}} - K)^+$ at time $T_{k}$. Note that here the payoff is in fact a $\mathcal{F}_{T_{k-1}}$-measurable random variable and it is paid at time $T_{k}$. This is known as \textit{payment in arrears}. There exist also other conventions for caplet payoffs, but this one is the one typically used. 

The time-$t$ price of the caplet, denoted by $P_t^{Cpl}$ is thus given by 
 \begin{align}
 \label{eq:caplet-price}
 P_t^{Cpl}  & = B_t(T_{k}) \delta_{k} \E^{\Q^{T_{k}}} [(L^k_{T_{k-1}} - K)^+ \mid \mathcal{F}_t] \\
 \notag & = B_t(T_n) \delta_k \E^{\Q^{T_n}} \left[ \prod_{j=k+1}^{n} (1 + \delta_j L^j_{T_{k-1}}) (L^k_{T_{k-1}} - K)^+ \mid \mathcal{F}_t\right] \\
 \notag & = B_t(T_n) \delta_k u(t, L_t)
  \end{align}
where $u$ is the solution to 
 \begin{align}
  \label{eq:caplet-PIDE}
 \partial_t u + \mathcal{A}_t u  & = 0 \\
\notag  u(T_{k-1}, x) & = \tilde g(x)
   \end{align}
with  
$$\tilde g(x) :=  (x_k - K)^+ \prod_{j=k+1}^{n} (1 + \delta_j x_j).
$$
 For the second equality in \eqref{eq:caplet-price} we have used the measure change from $\Q^{T_k}$ to $\Q^{T_n}$ given in \eqref{eq:forward-measure}.

\begin{remark}\label{nolevy.rq}
Noting that the payoff of the caplet depends on one single underlying forward Libor rate $L^k$, it is often more convenient to price it directly under the corresponding forward measure $\Q^{T_k}$, using the first equality in \eqref{eq:caplet-price}. Thus, one has 
$$
P_t^{Cpl} = B_t(T_k)\delta_k  u(t, L_t),
$$
where $u$ is the solution to 
 \begin{align}
  \label{eq:caplet-PIDE-measure-Tk}
 \partial_t u + \mathcal{A}^{T_k}_t u  & = 0 \\
\notag  u(T_{k-1}, x) & = \tilde g(x)
   \end{align}
with  $\tilde g(x) := (x_k - K)^+$ and where $\mathcal{A}^{T_k}$ is the generator of $L$ under the forward measure $\Q^{T_k}$.  %given by \eqref{eq:Tk-generator}.
In the log-normal LMM this leads directly to the Black's formula for caplet prices. However, in the Lévy Libor model the driving process $X$ under the forward measure  $\Q^{T_k}$ is not a Lévy process anymore since its compensator of the random measure of jumps becomes stochastic (see \eqref{Tk-Levy-measure}). Therefore, passing to the forward measure in this case does not lead to a closed-form pricing formula and does not bring any particular advantage. This is why in the forthcoming section we shall work directly under the terminal measure $\Q^{T_n}$.
\end{remark}

\subsection{Swaptions}
\label{s:swaptions}
 Let us consider a swaption, written on a fixed-for-floating (payer) interest rate swap with inception date $T_0$, payment dates  $T_1, \ldots, T_n$ and nominal $N=1$. We denote by $K$ the swaption strike rate and assume for simplicity that the maturity $T$ of the swaption coincides with the inception date of the underlying swap, i.e. we assume $T=T_0$. Therefore, the payoff of the swaption at maturity is given by $\left(P^{Sw}(T_0; T_0, T_n, K) \right)^{+} $, where $P^{Sw}(T_0; T_0, T_n, K)$ denotes the value of the swap with fixed rate $K$ at time $T_0$ given by 
\begin{align*}
P^{Sw}(T_0; T_0, T_n, K)  & =  \sum_{j=1}^{n} \delta_{j} B_{T_0}(T_j) \E^{\Q^{T_j}} \left[ L^j_{T_{j-1}} - K | \mathcal{F}_{T_0} \right] \\
& =  \sum_{j=1}^{n} \delta_{j}   B_{T_0}(T_j) \left( L^j_{T_{0}} - K \right) \\
& =  (\sum_{j=1}^{n} \delta_{j}   B_{T_0}(T_j)) \left( R(T_0; T_0, T_n) - K \right) 
\end{align*}
where
\begin{align}
\label{eq:swap-rate}
R(t; T_0, T_n) & =  \frac{\sum_{j=1}^{n} \delta_{j} B_t(T_j) L^j_{t}}{\sum_{j=1}^{n} \delta_{j} B_t(T_j)}=:  \sum_{j=1}^{n} w_{j}  L^j_{t}
\end{align}
is the swap rate i.e. the fixed rate such that the time-$t$ price of the swap is equal to zero. 
Here we denote 
\begin{align}
\label{eq:swap-weights}
w_{j}(t)& :=  \frac{\delta_{j} B_t(T_j)}{\sum_{k=1}^{n} \delta_{k} B_t(T_k)} % = \frac{\delta_{j} B_t(T_j)}{N_t} 
\end{align}

Note that $\sum_{j=1}^{n} w_{j}(t) =1$. Dividing the numerator and the denominator in \eqref{eq:swap-rate} by $B_t(T_n)$ and using the telescopic products together with  \eqref{default-free-libor-rate} we see that $w_j(t) = f_j(L_t)$ for a function $f_j$ given by
\begin{align}
\label{eq:swap-weights-Libors}
f_j(x) & = \frac{\delta_{j} \prod_{i=j+1}^{n} (1+ \delta_i x_i)}{\sum_{k=1}^{n} \delta_{k} \prod_{i=k+1}^{n} (1+ \delta_i x_i)}
\end{align}
for $j=1, \ldots, n$. 

Therefore, the swaption price at time $t \leq T_0$ is given by
\begin{align}
\label{eq:swaption-price}
\notag & P^{Swn}(t; T_0, T_n, K) \\
 & \quad =  B_{t}(T_0) \E^{\Q^{T_0}}  \left[ \left( P^{Sw}(T_0;  T_0, T_n, K) \right)^{+} | \mathcal{F}_t  \right]\\
  \notag & \quad  = B_{t}(T_0) \E^{\Q^{T_0}}  \left[    (\sum_{j=1}^{n} \delta_{j}   B_{T_0}(T_j)) \left( R(t_0; T_0, T_n) - K  \right)^{+}  | \mathcal{F}_t  \right]\\ 
  \notag & \quad=  B_t(T_n) \E^{\Q^{T_n}} \left[   \frac{\sum_{j=1}^{n} \delta_{j}   B_{T_0}(T_j)}{B_{T_0}(T_n)} \left( R(t_0; T_0, T_n) - K \right)^{+}  | \mathcal{F}_t  \right] \\
\notag  & \quad = B_t(T_n) u(t, L_t)
\end{align}
where $u$ is the solution to 
 \begin{align}
  \label{eq:swaption-PIDE-1}
 \partial_t u + \mathcal{A}_t u  & = 0 \\
\notag  u(T_0, x) & = \tilde g(x)
   \end{align}
with  $\tilde g(x) :=   \delta_n f_n(x)^{-1} \left( \sum_{j=1}^{n} f_j(x) x_j - K\right)^+$.

\section{Approximate pricing}
\label{asymptotics}

\subsection{Approximate pricing for general payoffs under the terminal measure}
\label{asympt-general-payoff}
Following an approach introduced by  \citeN{CernyDenklKallsen13}, we
introduce a small parameter into the model by defining the rescaled
L\'evy process $X^\alpha_t:= \alpha X_{t/\alpha^2}$ with $\alpha\in
(0,1)$. The process $X^\alpha$ is a martingale L\'evy process  under the terminal measure $\Q^{T_n}$ with
characteristic triplet $(0,c,F_\alpha)$ with respect to the truncation
function $h(z)=z$, where 
$$
F_\alpha(A) = \frac{1}{\alpha^2} F(\{z\in \mathbb R^d: z\alpha \in
A\},\quad \text{for}\ A\in \mathcal B(\mathbb R^d). 
$$
We now consider a family of L\'evy Libor models driven by the processes
$X^\alpha$, $\alpha \in (0, 1)$, and defined by 
\begin{align}
\label{eq:alpha-Libor}
dL^\alpha_t = L^\alpha_{t-}(b_\alpha(t,L^\alpha_t) dt + \Lambda(t)dX^\alpha_t),
\end{align}
where the drift $b_\alpha$ is given by \eqref{eq:drift} with $F$
replaced by $F_\alpha$. Substituting the explicit form of $F_\alpha$,
we obtain
\begin{align*}
b_\alpha^{k}(t, L_t) & = -  \sum_{j=k+1}^{n} \frac{\delta_j L_{t}^j}{1+ \delta_j L_{t}^j} \langle \lambda^k(t), c \, \lambda^j(t) \rangle \\
\notag & + \frac{1}{\alpha}\int_{\R^d} \langle \lambda^k(t), z \rangle
\left( 1 - \prod_{j=k+1}^{n} \left( 1 + \frac{\alpha\delta_j L_{t}^j
      \langle \lambda^j(t), z \rangle} {1+ \delta_j L_{t}^j}
  \right)\right) F(dz)\\
& = -  \sum_{j_0 = k+1}^n \Sigma_{k j_0}(t) \frac{\delta_{j_0} L^{j_0}_t}{1+\delta_{j_0} L^{j_0}_t}   \\
&  - \sum_{p=1}^{n-k-1} \alpha^p  \sum_{j_0 = k+1}^n
\sum_{j_1 = j_0+1}^n \dots \sum_{j_p = j_{p-1}+1}^n M^{p+2}_t(\lambda^{k}, \lambda^{j_0}, \ldots,\lambda^{j_p}) \prod_{l=0}^p
\frac{\delta_{j_l} L^{j_l}_t}{1+\delta_{j_l} L^{j_l}_t} \\
& =: - \sum_{p=0}^{n-k-1} \alpha^p b_p^{k}(t, L_t)
\end{align*}
where we define
\begin{equation}
\label{eq:BS-vol}
\Sigma_{ij}(t) := (\Lambda(t) c
\Lambda(t)^\top )_{ij}  + \int_{\mathbb R^d} \langle \lambda^i(t), z\rangle \langle \lambda^j(t),
z\rangle F(dz),
\end{equation}
for all $i, j =1, \ldots, n$, and  
\begin{equation}
\label{eq:moments-Levy-measure}
M^k_t(\lambda^{1},\dots,\lambda^k) :=  \int_{\mathbb
  R^d} \prod_{p=1}^k \langle \lambda^{p}(t), z\rangle F(dz)
\end{equation}
for all $k=1, \ldots, n$. We denote the infinitesimal generator of $L^\alpha$ by $\mathcal
A^\alpha_t$. 
For a smooth function $f:\mathbb R^d \to \mathbb R$, the infinitesimal
generator $\mathcal A^\alpha_tf$ can be expanded in powers of $\alpha$ as follows:
\begin{multline*}
\mathcal A^\alpha_t f(x) = \sum_{i=1}^n b^i_\alpha(t,x) x_i\frac{\partial
  f(x)}{\partial x_i} + \frac{1}{2}\sum_{i,j=1}^n \Sigma_{i j }(t) x_i x_j \frac{\partial^2 f(x)}{\partial
  x_i \partial x_j}  \\+ \sum_{k=3}^\infty \sum_{i_1,\dots,i_k =
  1}^{ n}\frac{\alpha^{k-2}}{k!}x_{i_1}\dots x_{i_k}\frac{\partial^k
  f(x)}{\partial x_{i_1} \dots \partial x_{i_k}} M^k_t(\lambda^{i_1},\dots,\lambda^{i_k}).
\end{multline*}
Consider now a financial product whose price is given by a generic
PIDE of the form \eqref{eq:general-PIDE} with $\mathcal A_t$ replaced
by $\mathcal A^\alpha_t$. Assuming sufficient
regularity\footnote{See \cite{MenasseTankov15} for rigorous arguments in a simplified but similar setting.}, one may expand the solution $u^\alpha$ in powers of $\alpha$:
\begin{align}
\label{eq:expansion-u}
u^\alpha(t,x) = \sum_{p=0}^\infty \alpha^p u_p (t,x). 
\end{align}
Substituting the expansions for $\mathcal A^\alpha_t$ and $b_\alpha$
into this equation, and gathering terms with the same power of
$\alpha$, we obtain an 'open-ended' system of PIDE for the terms in the
expansion of $u^\alpha$. 

The zero-order term $u_0$ satisfies
\begin{align*}
& \partial_t u_0 + \mathcal A^0_t u_0 = 0,\quad u_{ 0}(T_k, x) = \tilde g(x)
\end{align*}
with
\begin{align}
\label{gen_u0}
\mathcal A^0_t u_0(t,x)  &  = \sum_{i=1}^n b^i_0(t,x) x_i \frac{\partial
  u_0(t, x)}{\partial x_i}  + \frac{1}{2}\sum_{i,j=1}^n \Sigma_{i j }(t)  x_i x_j \frac{\partial^2 u_0(t, x)}{\partial
  x_i \partial x_j}\\
  \label{e:drift-b0}
b_0^i(t,x)& = - \sum_{j=i+1}^n \Sigma_{i j }(t)  \frac{\delta_j
  x_j}{1+ \delta_j x_j}. 
\end{align}
Hence, by the Feynman-Kac formula
\begin{align}
\label{eq:u0}
u_0(t, x)&  = E^{\Q^{T_n}}\left[ \tilde g(X^{t,x}_{T_k}) \right]
\end{align}
where the process $X^{t, x} = (X^{i,t,x})_{i=1}^n$ satisfies the stochastic differential equation
\begin{align}
\label{eq:LMM-diffusion}
d X^{i,t,x}_s & =  X^{i,t,x}_s\{b^i_0(s, X^{i,t,x}_s) + \sigma_i d W_s \},\quad X^{i,t,x}_t = x_i,
\end{align}
with $ W$ a $d$-dimensional standard Brownian motion with respect to $ \Q^{T_n} $ and $\sigma$ an $n \times d$-dimensional matrix such that $\sigma \sigma^{\top} = (\Sigma_{i,j})_{i,j=1}^{n}$. %and the drift $b$ is defined by

To obtain an explicit approximation for the higher order terms $u_1(t, x)$  and $u_2(t, x)$ given above, we consider the following proposition. 
\begin{proposition}
\label{p:martingale}
Let $Y$ be an $n$-dimensional log-normal process whose components follow the dynamics
$$
dY^i_t = Y^i_t (\mu_i(t) dt + \sigma_i(t) dW_t),
$$
where $\mu$ and $\sigma$ are measurable functions such that 
$$
\int_0^T (\|\mu(t)\| + \|\sigma(t)\|^2)dt <\infty
$$
and for all $y\in \mathbb R^n$ and some $\varepsilon>0$,  
$$\inf_{0\leq t\leq T}y \sigma(t) \sigma(t)^T y^T\geq \varepsilon \|y\|^2.
$$ 
We denote by $Y^{t,y}$ the process starting from $y$ at time $t$, and by $Y^{t,y,i}$ the $i$-th component of this process. Let $f$ be a bounded measurable
function and define
$$
v(t,y) = \mathbb E[f(Y^{t,y}_T)].
$$ Then, for all $i_1,\dots,i_m$, the process
$$
Y^{t,y,i_1}_s \dots Y^{t,y,i_m}_s \frac{\partial^m v (Y^{t,y}_s)}{\partial
  y_{i_1}\dots \partial y_{i_m}} ,\quad s\geq t,
$$
is a martingale. 
\end{proposition}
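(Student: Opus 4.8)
The plan is to exploit that the flow of a log-normal SDE is \emph{linear in its starting point}. Solving $dY^i_s=Y^i_s(\mu_i(s)\,ds+\sigma_i(s)\,dW_s)$ explicitly gives
\[
Y^{t,y,i}_s=y_i\,Z^i_{t,s},\qquad Z^i_{t,s}:=\exp\Bigl(\int_t^s\bigl(\mu_i(u)-\tfrac12|\sigma_i(u)|^2\bigr)\,du+\int_t^s\sigma_i(u)\,dW_u\Bigr),
\]
where the factor $Z^i_{t,s}$ does not depend on $y$; in particular $Y^{s,Y^{t,y}_s}_T=Y^{t,y}_T$ holds pathwise. Since $\mu$ and $\sigma$ are deterministic, $(\log Z^1_{s,T},\dots,\log Z^n_{s,T})$ is a Gaussian vector with covariance $\bigl(\int_s^T\langle\sigma_i(u),\sigma_j(u)\rangle\,du\bigr)_{i,j}$, which is nondegenerate for $s<T$ by the ellipticity hypothesis. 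Hence the transition kernel of $Y$ over $[s,T]$ admits a density $p_{s,T}(y,\cdot)$ that is $C^\infty$ and rapidly decaying in the $\log$-variables, $y\mapsto v(s,y)=\mathbb{E}[f(Y^{s,y}_T)]$ is smooth on $(0,\infty)^n$, and one may differentiate it arbitrarily often under the expectation.

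I would first prove the claim for $f\in C^\infty_b$. Differentiating $v(s,y)=\mathbb{E}\bigl[f(y_1Z^1_{s,T},\dots,y_nZ^n_{s,T})\bigr]$ under the expectation $m$ times and multiplying by $y_{i_1}\cdots y_{i_m}$, absorbing each factor $y_{i_\ell}$ into the corresponding $Z^{i_\ell}_{s,T}$ to form $Y^{s,y,i_\ell}_T$, yields
\[
w(s,y):=y_{i_1}\cdots y_{i_m}\,\frac{\partial^m v(s,y)}{\partial y_{i_1}\cdots\partial y_{i_m}}=\mathbb{E}\bigl[G\bigl(Y^{s,y}_T\bigr)\bigr],\qquad G(x):=x_{i_1}\cdots x_{i_m}\,\frac{\partial^m f(x)}{\partial x_{i_1}\cdots\partial x_{i_m}}.
\]
Thus the process in the statement is exactly $w(s,Y^{t,y}_s)$, where $w$ is the fixed value function associated with the payoff $G$. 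Because $(W_u-W_s)_{u\in[s,T]}$ is independent of $\mathcal{F}_s$ and $Y^{s,Y^{t,y}_s}_T=Y^{t,y}_T$, the Markov property gives $\mathbb{E}[G(Y^{t,y}_T)\mid\mathcal{F}_s]=w(s,Y^{t,y}_s)$, so $w(s,Y^{t,y}_s)=\mathbb{E}[G(Y^{t,y}_T)\mid\mathcal{F}_s]$, $s\in[t,T]$, is a closed — hence genuine — martingale; integrability is immediate since $|G(x)|\le\|\partial^m f\|_\infty\prod_\ell|x_{i_\ell}|$ and products of the log-normal variables $Y^{t,y,i_\ell}_T$ have finite expectation under the assumption $\int_0^T(\|\mu\|+\|\sigma\|^2)\,du<\infty$.

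Finally I would remove the smoothness of $f$ by approximation. For bounded measurable $f$ one has $v(s,y)=\int f(x)\,p_{s,T}(y,x)\,dx$, and since $p_{s,T}(y,\cdot)$ together with all its $y$-derivatives lies in $L^1$, locally uniformly in $y$, both $v(s,\cdot)$ and $w(s,y)=y_{i_1}\cdots y_{i_m}\,\partial^m v(s,y)/\partial y_{i_1}\cdots\partial y_{i_m}$ depend on $f$ only through integration against fixed $L^1$-kernels. Choosing $f_N\in C^\infty_b$ with $\sup_N\|f_N\|_\infty<\infty$ and $f_N\to f$ pointwise and invoking dominated convergence propagates the martingale identity to the limit, giving the result for all bounded measurable $f$. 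The only real work is this regularity bookkeeping — the smoothness and integrable decay of the log-normal transition density, the legitimacy of differentiating under the integral, and the domination needed for the limit; the martingale property itself is then a one-line consequence of the linearity of the log-normal flow and the tower property. Equivalently, one may observe that in the coordinates $u_i=\log y_i$ the generator $\mathcal{L}_s$ of $Y$ is a degree-two polynomial with $s$-dependent coefficients in the commuting Euler operators $\mathcal{D}_i=y_i\partial_{y_i}=\partial_{u_i}$; hence $\mathcal{L}_s$ commutes with each $\mathcal{D}_k$, so if $v$ solves the backward equation $\partial_s v+\mathcal{L}_s v=0$ then so does $w=\mathcal{D}_{i_1}\cdots\mathcal{D}_{i_m}v$, and the martingale property follows from It\^o's formula together with the integrability noted above.
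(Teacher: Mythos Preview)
Your argument is correct and follows exactly the strategy the paper indicates: the paper's own proof is only the one-line sketch ``direct differentiation for smooth $f$ together with a standard approximation argument for a general measurable $f$,'' and you have carried out precisely this, exploiting the explicit factorisation $Y^{t,y,i}_s=y_iZ^i_{t,s}$ to identify $y_{i_1}\cdots y_{i_m}\partial^m_{y_{i_1}\cdots y_{i_m}}v(s,y)$ as the value function of the payoff $G(x)=x_{i_1}\cdots x_{i_m}\partial^m_{x_{i_1}\cdots x_{i_m}}f(x)$ and then invoking the Markov property. Your additional remark that the Euler operators $\mathcal{D}_i=y_i\partial_{y_i}$ commute with the generator is a pleasant alternative viewpoint not mentioned in the paper, but the core route is the same.
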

The proof can be carried out by direct differentiation for smooth $f$
together with a standard approximation argument for a general measurable $f$.  \\

Furthermore, we assume the following simplification for the drift terms:
\begin{itemize}
\item[] For all $i=1, \ldots, n-1$ and $p=1, \ldots, n-k-1$, the random quantities in the terms $b^i_p(t, L_t)$  in the expansion of the drift  of the Libor rates under
  the terminal measure are constant and equal to their value at time
  $t$, i.e. for all $j=1, \ldots, n$:
  \begin{align}
  \label{eq:frozen-drift}
  \frac{\delta_{j}
  L^{j}_s}{1+ \delta_{j} L^{j}_s} = \frac{\delta_{j}
  L^{j}_t}{1+ \delta_{j} L^{j}_t},  \quad \textrm{for all} \ \ s\geq t.
  \end{align}
 This simplification is known as \emph{freezing of the drift} and is often used  for pricing in the Libor market models. 
\end{itemize}

Coming back now to the first-order term $u_1$, we see that it is the solution of 
\begin{align}
\label{eq:PIDE-u1}
& \partial_t u_1 + \mathcal A^0_t u_1 + \mathcal A^1_t u_0 = 0,\quad u_{1}(T_{k},x) = 0
\end{align}
with
\begin{align}
\label{gen_u1}
\mathcal A^1_t u_0 (t, x) & = \sum_{j=1}^n b^j_1(t,x) x_j \frac{\partial  u_0(t, x)}{\partial x_j}  \\
\notag & \quad \qquad  + \frac{1}{6}  \sum_{i_1,i_2, i_3 =  1}^{n} x_{i_1} x_{i_2} x_{i_3}\frac{\partial^3  u_0(t, x)}{\partial  x_{i_1} \partial  x_{i_2} \partial x_{i_3}} M_t^3(\lambda^{i_1}, \lambda^{i_2}, \lambda^{i_3}) 
\end{align}
and the drift term 
\begin{align}
\label{eq:drift-order-1}
& b_1^j(t,x) = - \sum_{j_0=j+1}^n \sum_{j_1 =
  j_0+1}^n M_t^3 (\lambda^{j}, \lambda^{j_0}, \lambda^{j_1}) \frac{\delta_{j_0}
  x_{j_0}}{1+ \delta_{j_0} x_{j_0}}\frac{\delta_{j_1}
  x_{j_1}}{1+ \delta_{j_1} x_{j_1}}.
  \end{align}

Moreover,
\begin{align*}
 \mathcal{A}^0_t u_1(t, x) & = \sum_{i=1}^n b^i_0(t,x) x_i \frac{\partial u_1(t, x)}{\partial x_i}  + \frac{1}{2}\sum_{i,j=1}^n \Sigma_{ij}(t) x_i x_j \frac{\partial^2 u_1(t, x)}{\partial  x_i \partial x_j}.
\end{align*}
We have
\begin{lemma}
Consider the model \eqref{eq:alpha-Libor}. Under the simplification \eqref{eq:frozen-drift}, the first-order term $u_1(t, x)$ in the expansion  \eqref{eq:expansion-u} can be approximated  by 
\begin{align}
\label{eq:u1-approx}
\notag u_1(t,x) &\approx  
 \frac{1}{6}  \sum_{i_1,i_2, i_3 =1}^n x_{i_1} x_{i_2} x_{i_3}
\frac{\partial^3 u_0(t,x)}{\partial x_{i_1}\partial x_{i_2}\partial
  x_{i_3}}\int_t^{T_k}
M^3_s(\lambda^{i_1},\lambda^{i_2},\lambda^{i_3}) ds\\
\notag & -  \sum_{j=1}^n \sum_{j_0=j+1}^n \sum_{j_1=j_0+1}^n    \frac{\delta_{j_0}
  x_{j_0}}{1+ \delta_{j_0} x_{j_0}}  \frac{\delta_{j_1}
  x_{j_1}}{1+ \delta_{j_1} x_{j_1}} 
  x_j \frac{\partial u_0(t,x)}{\partial x_j} \int_t^{T_k}
M^3_s(\lambda^{j},\lambda^{j_0},\lambda^{j_1}) ds \\
  & =: \tilde u_1(t, x).
\end{align}
\end{lemma}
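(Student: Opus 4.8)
\emph{Proof strategy and Step 1 (probabilistic representation).} The plan is to represent $u_1$ probabilistically via the Duhamel (Feynman--Kac) formula, to turn the underlying diffusion into a genuine log-normal process using the freezing \eqref{eq:frozen-drift}, and then to apply Proposition~\ref{p:martingale} term by term. By \eqref{eq:PIDE-u1}, $u_1$ solves the linear inhomogeneous Cauchy problem $\partial_t u_1+\mathcal A^0_t u_1=-\mathcal A^1_t u_0$ on $[t,T_k]$ with $u_1(T_k,\cdot)=0$. Since $\mathcal A^0_t$ is exactly the generator of the process $X^{t,x}$ of \eqref{eq:LMM-diffusion}, Duhamel's principle (Feynman--Kac, under the same regularity and polynomial-growth assumptions already invoked for \eqref{eq:general-PIDE}) gives
\[
u_1(t,x)=\E^{\Q^{T_n}}\!\left[\int_t^{T_k}\mathcal A^1_s u_0\big(s,X^{t,x}_s\big)\,ds\right],
\]
with $\mathcal A^1_s u_0$ given explicitly by \eqref{gen_u1}--\eqref{eq:drift-order-1}.

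\emph{Step 2 (reduction to a log-normal setting).} We now invoke \eqref{eq:frozen-drift}, the ratios being frozen at the fixed point $x$ at which $u_1(t,x)$ is evaluated. Two things happen. First, the drift in \eqref{eq:LMM-diffusion}, namely $b^i_0(s,X^{t,x}_s)=-\sum_{j>i}\Sigma_{ij}(s)\,\delta_jX^{j,t,x}_s/(1+\delta_jX^{j,t,x}_s)$ by \eqref{e:drift-b0}, collapses to the deterministic function $s\mapsto-\sum_{j>i}\Sigma_{ij}(s)\,\delta_jx_j/(1+\delta_jx_j)$, so that $X^{t,x}$ becomes precisely the log-normal process of Proposition~\ref{p:martingale} with this drift and a volatility $\sigma$ satisfying $\sigma\sigma^\top=\Sigma$ (the uniform ellipticity condition of that proposition holding provided the matrix $\Sigma$ in \eqref{eq:BS-vol} is uniformly positive definite). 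Second, the coefficient $b^j_1(s,X^{t,x}_s)$ in \eqref{eq:drift-order-1} depends on the path only through these frozen ratios, hence also becomes deterministic and equal to $-\sum_{j_0>j}\sum_{j_1>j_0}M^3_s(\lambda^{j},\lambda^{j_0},\lambda^{j_1})\,\tfrac{\delta_{j_0}x_{j_0}}{1+\delta_{j_0}x_{j_0}}\tfrac{\delta_{j_1}x_{j_1}}{1+\delta_{j_1}x_{j_1}}$. Within this same approximation $u_0$ is identified with the value function $v(s,y)=\E^{\Q^{T_n}}[\tilde g(X^{s,y}_{T_k})]$ of the log-normal process, so that Proposition~\ref{p:martingale} is applicable to $v=u_0$.

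\emph{Step 3 (term-by-term martingale argument).} Substituting \eqref{gen_u1} into the representation of Step~1 produces two groups of terms. For the third-derivative group, Proposition~\ref{p:martingale} with $m=3$ shows that $s\mapsto X^{i_1,t,x}_sX^{i_2,t,x}_sX^{i_3,t,x}_s\,\partial^3_{x_{i_1}x_{i_2}x_{i_3}}u_0(s,X^{t,x}_s)$ is a $\Q^{T_n}$-martingale; for the drift group, the proposition with $m=1$ shows that $s\mapsto X^{j,t,x}_s\,\partial_{x_j}u_0(s,X^{t,x}_s)$ is a martingale. Since in each summand the remaining factors --- the moments $M^3_s$ of \eqref{eq:moments-Levy-measure} and the frozen ratios --- are deterministic, Fubini's theorem lets us interchange expectation and time integral and evaluate each martingale at $s=t$, where $X^{t,x}_t=x$; this replaces the inner expectation by $x_{i_1}x_{i_2}x_{i_3}\,\partial^3_{x_{i_1}x_{i_2}x_{i_3}}u_0(t,x)$, resp.\ $x_j\,\partial_{x_j}u_0(t,x)$, and leaves $\int_t^{T_k}M^3_s\,ds$ as the time integral. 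Collecting the two resulting sums is exactly the right-hand side of \eqref{eq:u1-approx}, i.e.\ $\tilde u_1(t,x)$.

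\emph{Main obstacle.} The delicate points are twofold. Proposition~\ref{p:martingale} is stated for \emph{bounded} $f$, whereas the transformed caplet and swaption payoffs $\tilde g$ grow linearly; making Step~3 rigorous therefore requires first truncating $\tilde g$, applying the proposition, and passing to the limit, using the uniform ellipticity of $\Sigma$ together with Gaussian moment bounds for the log-normal process --- the ``standard approximation argument'' mentioned after Proposition~\ref{p:martingale}. More fundamentally, the approximation sign in \eqref{eq:u1-approx} is due solely to the freezing \eqref{eq:frozen-drift}: once the drift ratios are frozen, every identity above is exact, so the incurred error is of the freezing type and, being of the same order as terms already discarded in the $\alpha$-expansion, is not quantified here.
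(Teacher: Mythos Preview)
Your proof is correct and follows essentially the same route as the paper: represent $u_1$ via Feynman--Kac for the inhomogeneous equation \eqref{eq:PIDE-u1}, then invoke the freezing \eqref{eq:frozen-drift} so that Proposition~\ref{p:martingale} applies term by term. You are more explicit than the paper about what the freezing actually accomplishes---in particular that it is needed not only to pull $b_1^j$ out of the expectation but also to render $X^{t,x}$ genuinely log-normal and to identify $u_0$ with the function $v$ of Proposition~\ref{p:martingale}---and your remarks on the boundedness hypothesis and on the source of the approximation sign are apt, but none of this constitutes a different approach.
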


\begin{proof}
Applying the Feynman-Kac formula to \eqref{eq:PIDE-u1}, we have,
\begin{align}
\label{eq:u1}
\notag u_1(t,x) &= \frac{1}{6}\int_t^{T_{k}}ds \sum_{i_1,i_2, i_3 =1}^n
M^3_s(\lambda^{i_1}, \lambda^{i_2}, \lambda^{i_3})\mathbb E^{\Q^{T_n}}\left[  X^{t,x,i_1}_s
  X^{t,x,i_2}_s  X^{t,x,i_3}_s
\frac{\partial^3 u_0(s,X^{t,x}_s)}{\partial x_{i_1}\partial x_{i_2}\partial
  x_{i_3}}\right]\\
& + \int_t^{T_{k}} ds \sum_{j=1}^n \mathbb E^{\Q^{T_n}}\left[b^j_1(s,X^{t,x}_s)
  X^{t,x,j}_s \frac{\partial u_0(s,X^{t,x}_s)}{\partial x_j}\right],
\end{align}
with the  process $(X^{t,x}_s)$ defined by \eqref{eq:LMM-diffusion}. 
Under the simplification \eqref{eq:frozen-drift}, we can apply Proposition \ref{p:martingale} to obtain \eqref{eq:u1-approx}.
\end{proof}

Similarly, the second-order term $u_2$ is the solution of 
\begin{align}
\label{eq:PIDE-u2}
& \partial_t u_2 + \mathcal A^0_t u_2 + \mathcal A^1_t u_1 +  \mathcal A^2_t u_0  = 0,\quad u_{2}(T_{k},x) = 0
\end{align}
with
\begin{align}
\label{gen_u2}
\mathcal A^2_t u_0 (t, x) & = \sum_{j=1}^n b^j_2(t,x) x_j \frac{\partial  u_0(t, x)}{\partial x_j}  \\
\notag &  + \frac{1}{24}  \sum_{i_1,i_2, i_3, i_4 =  1}^{n} x_{i_1} x_{i_2} x_{i_3}  x_{i_4}\frac{\partial^4  u_0(t, x)}{\partial  x_{i_1} \partial  x_{i_2} \partial x_{i_3}  x_{i_4}} M_t^4(\lambda^{i_1}, \lambda^{i_2}, \lambda^{i_3},  \lambda^{i_4}) 
\end{align}
and the drift 
\begin{align}
\label{eq:drift-order-2}
\notag  & b_2^j(t,x) = - \sum_{j_0=j+1}^n \sum_{j_1 =
  j_0+1}^n \sum_{j_2 =
  j_1+1}^n M_t^4 (\lambda^{j}, \lambda^{j_0}, \lambda^{j_1}, \lambda^{j_2}) \frac{\delta_{j_0}
  x_{j_0}}{1+ \delta_{j_0} x_{j_0}}  \\
  & \quad \quad \quad \quad \quad \quad \cdot  \frac{\delta_{j_1}
  x_{j_1}}{1+ \delta_{j_1} x_{j_1}} \frac{\delta_{j_2}
  x_{j_2}}{1+ \delta_{j_2} x_{j_2}}.
  \end{align}
  
  \begin{lemma}
Consider the model \eqref{eq:alpha-Libor}. Under the simplification \eqref{eq:frozen-drift}, the second-order term $u_2(t, x)$ in the expansion  \eqref{eq:expansion-u} can be approximated  by 
\begin{align}
\label{eq:u2-approx}
u_2(t,x) &\approx  \tilde u_2(t, x) := \tilde E_1 + \tilde E_2 +  \tilde E_3 +  \tilde E_4,
\end{align}
with
\begin{align}
\label{eq:E1}
\notag  \tilde E_1 & :=  \frac{1}{6} \sum_{i_1, i_2, i_3 =1}^n x_{i_1} x_{i_2} x_{i_3} \int_t^{T_{k}} ds 
M^3_s(\lambda^{i_1}, \lambda^{i_2}, \lambda^{i_3}) \\ 
& \cdot \left[ \frac{1}{6} \sum_{i_4, i_5, i_6 =1}^n \left( \int_s^{T_{k}} M^3_v(\lambda^{i_4}, \lambda^{i_5}, \lambda^{i_6}) dv \right)  \right. 
  \frac{\partial^3 v^{i_4, i_5, i_6}(t, x)}{\partial x_{i_1}\partial x_{i_2}\partial x_{i_3}} \\
\notag & \left. \quad \quad -  \sum_{ j_4=1}^n  \sum_{j_5 = j_4 +1}^n \sum_{j_6 = j_5 +1}^n  
\left( \int_s^{T_{k}} M^3_v(\lambda^{j_4}, \lambda^{j_5}, \lambda^{j_6}) dv \right) 
  \frac{\partial^3 \bar v^{j_4, j_5, j_6}(t, x)}{\partial x_{i_1}\partial x_{i_2}\partial x_{i_3}} \right]
\end{align}
\begin{align}
\label{eq:E2}
 \notag  \tilde E_2 & := -  \sum_{j = 1}^n \sum_{j_0 = j +1}^n \sum_{j_1 = j_0 +1}^n \frac{\delta_{j_0} x_{j_0}}{1+ \delta_{j_0} x_{j_0}}  \frac{\delta_{j_1} x_{j_1}}{1+ \delta_{j_1} x_{j_1}} x_{j}  
 \int_t^{T_{k}} ds M_s(\lambda^{j}, \lambda^{j_0}, \lambda^{j_1})  \\
& \cdot \left[ \frac{1}{6} \sum_{i_4, i_5, i_6 =1}^n \left( \int_s^{T_{k}} M^3_v(\lambda^{i_4}, \lambda^{i_5}, \lambda^{i_6}) dv \right) \right.   \frac{\partial v^{i_4, i_5, i_6}(t, x)}{\partial x_{j}} \\ 
\notag & \left. - \sum_{ j_4 =1}^{n}  \sum_{j_5 = j_4 +1}^n \sum_{j_6 = j_5 +1}^n 
\left( \int_s^{T_{k}} M^3_v(\lambda^{j_4}, \lambda^{j_5}, \lambda^{j_6}) dv \right) 
 \frac{\partial^3 \bar v^{j_4, j_5, j_6}(t, x)}{\partial x_{j}} \right]
\end{align}
\begin{align}
\label{eq:E3}
 \tilde E_3 & := \frac{1}{24}  \sum_{i_1,i_2, i_3, i_4 =1}^n x_{i_1} x_{i_2} x_{i_3} x_{i_4}
\frac{\partial^4 u_0(t,x)}{\partial x_{i_1}\partial x_{i_2}\partial
  x_{i_3}\partial x_{i_4}}\int_t^{T_k}ds 
M^4_s(\lambda^{i_1},\lambda^{i_2},\lambda^{i_3}, \lambda^{i_4})
\end{align}
and
\begin{align}
\label{eq:E4}
 \notag  \tilde E_4 & := 
 -  \sum_{j=1}^n \sum_{j_0=j+1}^n \sum_{j_1=j_0+1}^n   \sum_{j_2=j_1+1}^n   \frac{\delta_{j_0}
  x_{j_0}}{1+ \delta_{j_0} x_{j_0}}  \frac{\delta_{j_1}
  x_{j_1}}{1+ \delta_{j_1} x_{j_1}} \frac{\delta_{j_2}
  x_{j_2}}{1+ \delta_{j_2} x_{j_2}} 
  x_j \frac{\partial u_0(t,x)}{\partial x_j}  \\
  & \quad \quad \quad  \cdot \int_t^{T_k}
M^4_s(\lambda^{j},\lambda^{j_0},\lambda^{j_1}, \lambda^{j_2}) ds
\end{align}
where we define 
\begin{align}
\label{eq:v}
v^{i, j, l}(t, x) & := x_i x_j x_l \frac{\partial^3 u_0(t, x)}{\partial x_{i}\partial x_{j}\partial x_{l}}  
\end{align}
for all $i, j, l =1, \ldots, n$
and 
\begin{align}
\label{eq:v-bar}
\bar v^{i, j, l}(t, x) & := x_i \frac{\delta_j x_j}{1+ \delta_j x_j}  \frac{\delta_l x_l}{1+ \delta_l x_l} \frac{\partial u_0(t, x)}{\partial x_{i}} 
\end{align}
for all $i =1, \ldots, n$, $j=i+1, \ldots, n$ and  $l=j+1, \ldots, n$. 
\end{lemma}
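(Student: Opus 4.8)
The plan is to follow the route already used for $u_1$, now starting from the Feynman--Kac representation of the solution of \eqref{eq:PIDE-u2},
\begin{align*}
u_2(t,x)=\int_t^{T_k}\E^{\Q^{T_n}}\!\big[(\mathcal A^1_s u_1+\mathcal A^2_s u_0)(s,X^{t,x}_s)\big]\,ds,
\end{align*}
where $X^{t,x}$ is the process \eqref{eq:LMM-diffusion}. Two simplifications are then made, both absorbed into the single symbol $\approx$ in \eqref{eq:u2-approx}: first, $u_1$ is replaced by its approximation $\tilde u_1$ from the previous lemma; second, the freezing assumption \eqref{eq:frozen-drift} is invoked, which renders $b^i_0$, $b^j_1$, $b^j_2$ (and the factors $\tfrac{\delta_j x_j}{1+\delta_j x_j}$ carried over from $\tilde u_1$) deterministic, so that in particular $X^{t,x}$ becomes a genuine log-normal process and Proposition \ref{p:martingale} applies. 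After these reductions the remaining equality is checked term by term, the four groups matching $\tilde E_3,\tilde E_4$ (coming from $\mathcal A^2_s u_0$) and $\tilde E_1,\tilde E_2$ (coming from $\mathcal A^1_s\tilde u_1$).

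The $\mathcal A^2_s u_0$ contribution is handled exactly as in the lemma for $u_1$. For the fourth-order term $\tfrac1{24}\sum x_{i_1}\!\cdots x_{i_4}\,\partial^4_{x_{i_1}\cdots x_{i_4}}u_0\,M^4_s$, Proposition \ref{p:martingale} with $m=4$ (and $f=\tilde g$, so that $v=u_0$) gives $\E^{\Q^{T_n}}[X^{t,x,i_1}_s\cdots X^{t,x,i_4}_s\,\partial^4_{x_{i_1}\cdots x_{i_4}}u_0(s,X^{t,x}_s)]=x_{i_1}\!\cdots x_{i_4}\,\partial^4_{x_{i_1}\cdots x_{i_4}}u_0(t,x)$, and the $ds$-integral produces $\tilde E_3$. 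For the drift term $\sum_j b^j_2(s,x)\,x_j\,\partial_{x_j}u_0$, the frozen factors $\prod_l\tfrac{\delta_{j_l}x_{j_l}}{1+\delta_{j_l}x_{j_l}}$ and the deterministic $M^4_s$ pull out of the expectation, and Proposition \ref{p:martingale} with $m=1$ gives $\E^{\Q^{T_n}}[X^{t,x,j}_s\,\partial_{x_j}u_0(s,X^{t,x}_s)]=x_j\,\partial_{x_j}u_0(t,x)$; summing over the constrained indices $j<j_0<j_1<j_2$ and integrating in $s$ yields $\tilde E_4$.

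The $\mathcal A^1_s\tilde u_1$ contribution is the substantive part. Since the time integrals $\int_s^{T_k}M^3_v\,dv$ appearing in $\tilde u_1(s,\cdot)$ do not depend on the space variable, they factor out, and the remaining expectations are of the form $\E^{\Q^{T_n}}[X^{t,x,i_1}_s X^{t,x,i_2}_s X^{t,x,i_3}_s\,\partial^3_{x_{i_1}x_{i_2}x_{i_3}}w(s,X^{t,x}_s)]$ and $\E^{\Q^{T_n}}[X^{t,x,j}_s\,\partial_{x_j}w(s,X^{t,x}_s)]$ with $w\in\{v^{i,j,l},\bar v^{i,j,l}\}$, the second type coming from the $b^j_1$-drift part of $\mathcal A^1_s$ after again pulling out the deterministic $M^3_s$ and the frozen factors inside $b^j_1$. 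The key observation is that, after freezing, both $v^{i,j,l}$ and $\bar v^{i,j,l}$ are themselves of the conditional-expectation form to which Proposition \ref{p:martingale} applies: indeed $v^{i,j,l}(s,X^{t,x}_s)=X^{t,x,i}_s X^{t,x,j}_s X^{t,x,l}_s\,\partial^3_{x_i x_j x_l}u_0(s,X^{t,x}_s)$ is a martingale by Proposition \ref{p:martingale}, and $\bar v^{i,j,l}(s,X^{t,x}_s)$ reduces to a frozen constant times $X^{t,x,i}_s\,\partial_{x_i}u_0(s,X^{t,x}_s)$, again a martingale; hence, by the Markov property, $v^{i,j,l}(s,y)=\E^{\Q^{T_n}}[v^{i,j,l}(T_k,X^{s,y}_{T_k})]$ and likewise for $\bar v^{i,j,l}$. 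Applying Proposition \ref{p:martingale} a second time, now with the fixed terminal data $v^{i,j,l}(T_k,\cdot)$, resp.\ $\bar v^{i,j,l}(T_k,\cdot)$, shows that each of the above time-$s$ expectations equals its value at time $t$; substituting back, the third-derivative part of $\mathcal A^1_s\tilde u_1$ gives $\tilde E_1$ and the $b^j_1$-drift part gives $\tilde E_2$, so that $u_2\approx\tilde E_1+\tilde E_2+\tilde E_3+\tilde E_4=\tilde u_2$.

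I expect the main obstacle to be this iterated use of Proposition \ref{p:martingale} together with the regularity it presupposes. One has to justify carefully that $v^{i,j,l}$ and $\bar v^{i,j,l}$, after freezing, genuinely lie in the scope of the proposition (the Markov-property / martingale-representation step above, which in effect applies the \emph{freezing} once more, to the factors $\tfrac{\delta x}{1+\delta x}$ already present in $\tilde u_1$), and to make sense of $\partial^3 u_0$, $\partial^4 u_0$ and of the terminal data $v^{i,j,l}(T_k,\cdot)$, $\bar v^{i,j,l}(T_k,\cdot)$, which for the caplet payoff are only distributional: here one relies, exactly as in the proof of Proposition \ref{p:martingale}, on the smoothing of $u_0(s,\cdot)$ by the log-normal transition density for $s<T_k$, on the finiteness of all moments of the log-normal law to control the polynomial (rather than bounded) growth of the functions involved, and on a smooth-approximation/dominated-convergence argument, after which the $ds$-integrals converge. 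A secondary point requiring care is the bookkeeping of the nested index ranges ($j<j_0<j_1<\dots$) so that the frozen coefficients $b^j_1,b^j_2$ reproduce precisely the constrained sums in \eqref{eq:E1}--\eqref{eq:E4}, and the remark that a single symbol $\approx$ in \eqref{eq:u2-approx} carries both the replacement $u_1\to\tilde u_1$ and the freezing \eqref{eq:frozen-drift}.
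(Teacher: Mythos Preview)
Your proposal is correct and follows essentially the same route as the paper: apply Feynman--Kac to \eqref{eq:PIDE-u2}, split the resulting time integral into the four expectations $E_1,\dots,E_4$ according to the two pieces of $\mathcal A^1_s u_1$ and the two pieces of $\mathcal A^2_s u_0$, and then reduce each expectation to its time-$t$ value by combining the freezing \eqref{eq:frozen-drift} with Proposition~\ref{p:martingale}. The paper phrases the treatment of $E_1,E_2$ as ``insert the expression for $u_1(s,X^{t,x}_s)$ as given by \eqref{eq:u1}'' and then perform ``straightforward calculations, based again on the application of Proposition~\ref{p:martingale} and the simplification \eqref{eq:frozen-drift}''; your two-step Markov/martingale argument (first recognizing $v^{i,j,l}$ and $\bar v^{i,j,l}$ as martingales, then reapplying Proposition~\ref{p:martingale} to them) is exactly a spelled-out version of that sentence, and your discussion of the regularity and growth issues is more explicit than the paper's, which simply absorbs them into the same approximation argument used for Proposition~\ref{p:martingale}.
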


\begin{proof}
Once again by the Feynman-Kac formula applied to \eqref{eq:PIDE-u2} we have 
\begin{align}
\label{eq:u2}
\notag u_2(t,x) &= \frac{1}{6}\int_t^{T_{k}}ds \sum_{i_1,i_2, i_3 =1}^n
M^3_s(\lambda^{i_1}, \lambda^{i_2}, \lambda^{i_3})\mathbb E^{\Q^{T_n}}\left[  X^{t,x,i_1}_s
  X^{t,x,i_2}_s  X^{t,x,i_3}_s
\frac{\partial^3 u_1(s,X^{t,x}_s)}{\partial x_{i_1}\partial x_{i_2}\partial
  x_{i_3}}\right]\\
\notag & + \int_t^{T_{k}} ds \sum_{j=1}^n \mathbb E^{\Q^{T_n}}\left[b^j_1(s,X^{t,x}_s)
  X^{t,x,j}_s \frac{\partial u_1(s,X^{t,x}_s)}{\partial x_j}\right] \\
  & + \frac{1}{24}\int_t^{T_{k}}ds \sum_{i_1,i_2, i_3, i_4 =1}^n
M^4_s(\lambda^{i_1}, \lambda^{i_2}, \lambda^{i_3}, \lambda^{i_4}) \\ 
\notag & \quad \quad \quad  \cdot \mathbb E^{\Q^{T_n}}\left[  X^{t,x,i_1}_s
  X^{t,x,i_2}_s  X^{t,x,i_3}_s X^{t,x,i_4}_s
\frac{\partial^4 u_0(s,X^{t,x}_s)}{\partial x_{i_1}\partial x_{i_2}\partial
  x_{i_3} \partial x_{i_4} }\right]\\
\notag   & + \int_t^{T_{k}} ds \sum_{j=1}^n \mathbb E^{\Q^{T_n}}\left[b^j_2(s,X^{t,x}_s)
  X^{t,x,j}_s \frac{\partial u_0(s,X^{t,x}_s)}{\partial x_j}\right]  \\
  \notag & =: E_1 + E_2 +  E_3 +  E_4
\end{align}
with the  process $(X^{t,x}_s)$ given by \eqref{eq:LMM-diffusion}, $b^j_1(s, x)$ by \eqref{eq:drift-order-1} and  $b^j_2(s, x)$  by \eqref{eq:drift-order-2}.

In order to obtain an explicit expression for $u_2(t, x)$, we apply Proposition \ref{p:martingale} combined with the simplification \eqref{eq:frozen-drift} for the drift terms $b_1^j$ and $b_2^j$ above. More precisely, the expressions for the third and the fourth expectation, which are present in the terms  $E_3 $ and $E_4 $, follow by a straightforward application of Proposition \ref{p:martingale} after using the simplification for $b_2^j$. We get 
\begin{align*}
E_3 & \approx  \tilde E_3  \qquad \textrm{and} \qquad E_4  \approx  \tilde E_4
\end{align*}
with $ \tilde E_3$ and $ \tilde E_4$ given by   \eqref{eq:E3} and  \eqref{eq:E4}, respectively.

 To obtain explicit expressions for $E_1 $ and $E_2$, firstly we  insert the expression for $u_1(s, X_s^{t, x})$ as given by \eqref{eq:u1}. After some straightforward calculations, based again on the application of  Proposition \ref{p:martingale} and the simplification   \eqref{eq:frozen-drift} for $b_1^j$, which yields 
 \begin{align*}
E_1 & \approx  \tilde E_1  \qquad \textrm{and} \qquad E_2 \approx  \tilde E_2
\end{align*}
with $ \tilde E_1$ and $ \tilde E_2$ given by  \eqref{eq:E1} and  \eqref{eq:E2}, respectively.

Collecting the terms above concludes the proof. %yields \eqref{eq:u2-approx}.
\end{proof}

Summarizing, we get the following expansion for the time-$t$ price $P^{\alpha}(t; g)$ of the payoff $g(L_{T_k})$ when $\alpha \to 0$. 

\begin{proposition}
\label{p:general-payoff}
Consider the model \eqref{eq:alpha-Libor} and a European-type payoff with maturity $T_k$ given by $\xi=g(L_{T_k})$. Assuming \eqref{eq:frozen-drift}, its time-$t$ price $P^{\alpha}(t; g)$ for $\alpha \to 0$ satisfies 
\begin{align}
P^{\alpha}(t; g) & = P_0(t; g) + \alpha P_1(t; g)  + \alpha^2 P_2(t; g)  + O(\alpha^3),
\end{align} 
with 
\begin{align*}
P_0(t; g) &  := B_t(T_n) u_0(t, L_t)=: P^{LMM}(t; g) \\ \\
\notag P_1(t; g)  &: =  B_t(T_n)  u_1(t, L_t) \approx B_t(T_n)  \tilde u_1(t, L_t) \\ \\ 
 \notag P_2(t; g)  & := B_t(T_n)  u_2(t, L_t) \approx   B_t(T_n)  \tilde u_2(t, L_t)
\end{align*}
where $P^{LMM}(t; g) $ denotes the time-$t$ price of the payoff $g(L_{T_k})$ in the log-normal LMM with covariance matrix $\Sigma$ and the drift  given by \eqref{e:drift-b0}, $M^3_s(\lambda^{i_1},\lambda^{i_2},\lambda^{i_3}) $ and $M^3_s(\lambda^{j},\lambda^{j_0},\lambda^{j_1})$ are given by \eqref{eq:moments-Levy-measure},  $u_0(t, x)$ by \eqref{eq:u0} and $\tilde u_1(t,x)$  and $\tilde u_2(t,x)$ by  \eqref{eq:u1-approx} and \eqref{eq:u2-approx}, respectively.
\end{proposition}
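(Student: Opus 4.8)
The plan is to read the price off the pricing PIDE of Section~\ref{pde}, substitute a truncated power series in $\alpha$, match coefficients, and then control the remainder. By \eqref{eq:general-PIDE} applied to the rescaled model \eqref{eq:alpha-Libor} one has $P^\alpha(t;g) = B_t(T_n)u^\alpha(t,L_t)$, where $u^\alpha$ solves $\partial_t u^\alpha + \mathcal A^\alpha_t u^\alpha = 0$ with $u^\alpha(T_k,\cdot)=\tilde g$. First I would insert the ansatz $u^\alpha = u_0 + \alpha u_1 + \alpha^2 u_2 + \alpha^3 R^\alpha$ together with the power-series expansions of $\mathcal A^\alpha_t$ and of the drift $b_\alpha = -\sum_{p}\alpha^p b_p$ derived above, and collect the terms of orders $\alpha^0,\alpha^1,\alpha^2$; this reproduces exactly the cascade \eqref{gen_u0}, \eqref{eq:PIDE-u1}, \eqref{eq:PIDE-u2}, whose solutions have already been identified in this section: $u_0$ by the Feynman--Kac formula \eqref{eq:u0}, and $u_1$, $u_2$ by the two preceding Lemmas, which under the freezing simplification \eqref{eq:frozen-drift} give $u_1\approx\tilde u_1$ and $u_2\approx\tilde u_2$. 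Setting $P_i(t;g):=B_t(T_n)u_i(t,L_t)$ for $i=0,1,2$ then yields the three displayed identities; since $\mathcal A^0_t$ is the generator of a log-normal system with covariance matrix $\Sigma$ and drift \eqref{e:drift-b0}, $P_0(t;g)$ is by construction the log-normal LMM price $P^{LMM}(t;g)$.

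Next I would make the remainder precise. Subtracting the equations for $u_0,u_1,u_2$ from the equation for $u^\alpha$ shows that $R^\alpha$ solves a linear PIDE $\partial_t R^\alpha + \mathcal A^\alpha_t R^\alpha = -\Phi^\alpha$ on $[t,T_k]$ with $R^\alpha(T_k,\cdot)=0$, where the source $\Phi^\alpha$ collects all contributions of order $\ge 3$ in $\alpha$ produced by $\mathcal A^\alpha_t$ and $b_\alpha$ acting on $u_0,u_1,u_2$. Each such contribution is a sum (finite for the drift part, and an infinite series coming from the non-truncating diffusion-generator expansion) of products of a bounded deterministic coefficient, a monomial $x_{i_1}\cdots x_{i_m}$, a mixed spatial derivative of $u_p$ (with $p\le 2$) of order $m$, and a moment $M^m_t$ of \eqref{eq:moments-Levy-measure}; these moments are finite under the standing integrability assumptions on $F$. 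Granting uniform-in-$\alpha$ bounds on the relevant spatial derivatives of $u_0,u_1,u_2$ and summability of the derivative tail against the $M^m_t$, one obtains $\sup_{\alpha\in(0,1)}\|\Phi^\alpha\|_\infty<\infty$, and the Feynman--Kac representation of $R^\alpha$ (equivalently, a maximum principle for \eqref{eq:general-PIDE}) gives $\|R^\alpha\|_\infty\le C(T_k-t)\sup_{\alpha}\|\Phi^\alpha\|_\infty$ uniformly in $\alpha$; multiplying by $\alpha^3 B_t(T_n)$ produces the $O(\alpha^3)$ term.

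The hard part will be supplying exactly that regularity. One must show that the terminal datum $\tilde g$ — which, after the caplet or swaption transformation, is only Lipschitz (in fact only of linear growth) — is smoothed by the flow into functions $u_0,u_1,u_2$ that are $C^\infty$ in space with derivatives up to the order appearing in $\tilde E_1,\dots,\tilde E_4$ (up to order six, through $v^{i,j,l}$ and $\bar v^{i,j,l}$) bounded on $[t,T_k]\times\mathbb R^n$, with bounds uniform in $\alpha$, and moreover that these high-order derivatives do not grow so fast as to spoil convergence of the generator series. This is where the non-degeneracy hypothesis $y\sigma\sigma^\top y^\top\ge\varepsilon\|y\|^2$ of Proposition~\ref{p:martingale} enters: in log-coordinates $x_i=e^{y_i}$ the Gaussian part of $\mathcal A^0_t$ is uniformly elliptic, and its smoothing, together with the regularising effect of the L\'evy kernel in $\mathcal A^\alpha_t$, yields the required estimates, along the lines of \cite{MenasseTankov15} and \cite[Chapter~7]{defranco.thesis}. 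I would also flag that the $\approx$ in the statement is not an asymptotic error term but records the separate freezing-of-the-drift approximation \eqref{eq:frozen-drift}: without it, $P_1$ and $P_2$ are still given exactly by the Feynman--Kac formulas \eqref{eq:u1}, \eqref{eq:u2}, only not in closed form.
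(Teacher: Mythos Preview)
Your proposal is correct and follows the paper's own approach: the proposition is stated in the paper as a summary of the preceding development in Section~\ref{asymptotics}, obtained by inserting the formal series \eqref{eq:expansion-u} into the pricing PIDE, expanding $\mathcal A^\alpha_t$ and $b_\alpha$ in powers of $\alpha$, and matching coefficients to produce the cascade \eqref{gen_u0}, \eqref{eq:PIDE-u1}, \eqref{eq:PIDE-u2}, with the two Lemmas supplying the approximations $\tilde u_1$, $\tilde u_2$ under the freezing \eqref{eq:frozen-drift}.

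One point of comparison: the paper does \emph{not} carry out the remainder analysis you sketch in your second and third paragraphs. It simply writes ``Assuming sufficient regularity'' with a footnote pointing to \cite{MenasseTankov15} for rigorous arguments in a simplified setting, and treats the $O(\alpha^3)$ as formal. Your outline of how one would bound $R^\alpha$ via a source-term PIDE, Feynman--Kac, and smoothing estimates in log-coordinates is a sensible plan for a rigorous version, but it goes beyond what the paper actually proves; in particular the uniform-in-$\alpha$ derivative bounds and the summability of the generator tail that you flag as ``the hard part'' are precisely what the paper leaves to the cited references. Your reading of the $\approx$ as recording the separate freezing approximation, distinct from the asymptotic error, is also correct.
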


\subsection{Approximate pricing of caplets}

Recalling that the caplet price is given by \eqref{eq:caplet-price}, where $u$ is the solution of the PIDE \eqref{eq:caplet-PIDE}, we can approximate this price using the development 
$$
u^{\alpha}(t, x) = u_0(t, x) + \alpha u_1(t, x) + \alpha^2 u_2(t, x)  + O(\alpha^3)
$$
where the zero-order term $u_0$ satisfies
\begin{align*}
&\partial_t u_0 + \mathcal A^0_t u_0 = 0,\quad u_{ 0}(T_{k-1},x) = (x_k - K)^+ \prod_{j=k+1}^{n} (1 + \delta_j x_j)
\\&\text{with}\quad \mathcal
A^0_t u_0 = \sum_{i=1}^n b^i_0(t,x) x_i \frac{\partial
  u_0(t, x)}{\partial x_i}  + \frac{1}{2}\sum_{i,j=1}^n \Sigma_{ij}(t) x_i x_j \frac{\partial^2 u_0(t, x)}{\partial
  x_i \partial x_j}\\
&\text{and}\quad b_0^i(t,x) = - \sum_{j=i+1}^n \Sigma_{ij}(t) \frac{\delta_j
  x_j}{1+ \delta_j x_j}. 
\end{align*}

The solution to the above PDE can be found via the Feynman-Kac
formula, where the conditional expectation is computed in the log-normal LMM
model with covariation matrix $(\Sigma_{i j})_{i, j=1}^n$ as in Section \ref{asympt-general-payoff}. Performing
a measure change from $\Q^{T_n}$ to $\Q^{T_k} $ and denoting by
$P_{BS}(V, S, K)$ the Black-Scholes price of a call option with
variance $V$,
$$
P_{BS}(V, S, K) = \E \left[ \left( S e^{-\frac{V}{2} + \sqrt{V}Z} - K
  \right)^+\right],\quad Z\sim N(0,1), 
$$ 
we see that the zero-order term is given by
\begin{align}
\label{eq:caplet-zero-order}
u_0(t, x) & = P_{BS}(V^{Cpl}_{t,T}, x_k, K)  \prod_{j=k+1}^{n} (1 + \delta_j x_j), 
\end{align}
where 
\begin{align}
\label{eq:caplet-variance}
V^{Cpl}_{t,T}& := \int_t^T  \Sigma_{kk}(s) ds.
\end{align}

Now, in complete analogy to the case of a general payoff, the first-order term $u_1(t, x)$ and the second-order term $u_2(t, x)$ are given by \eqref{eq:u1} and  \eqref{eq:u2}, respectively, with $u_0(t, x)$ as in \eqref{eq:caplet-zero-order}. Noting that $u_0(t, x)$ depends only on $x_k, x_{k+1}, \ldots, x_n$, the derivatives of $u_0(t, x)$ with respect to $x_1, \ldots, x_{k-1}$ are zero and the sums in \eqref{eq:u1} and  \eqref{eq:u2} in fact start from the index $k$. An application of Proposition \ref{p:martingale} and simplification \eqref{eq:frozen-drift} thus yields 
the following proposition, which provides an approximation of the caplet price $P_t^{Cpl, \alpha}(T_k, K)$ when $\alpha \to 0$. 

\begin{proposition}
Consider the model \eqref{eq:alpha-Libor} and a caplet with  strike $K$ and maturity $T_{k-1}$. Assuming \eqref{eq:frozen-drift}, its time-$t$ price $P^{Cpl, \alpha}(t; T_{k-1}, T_k,  K)$ for $\alpha \to 0$ satisfies
\begin{align}
P^{Cpl, \alpha}(t; T_{k-1}, T_k,  K) & = P^{Cpl}_0(t; T_{k-1}, T_k,  K) + \alpha P^{Cpl}_1(t; T_{k-1}, T_k,  K)  \\
\notag & + \alpha^2 P^{Cpl}_2(t; T_{k-1}, T_k,  K)  + O(\alpha^3),
\end{align} 
with 
\begin{align*}
 P^{Cpl}_0(t; T_{k-1}, T_k,  K)  &  := B_t(T_n) \delta_k u_0(t, L_t) \\
& = B_t(T_n) \delta_k P_{BS}(V^{Cpl}_{t,T_{k-1}}, L_t^k, K) \prod_{j=k+1}^{n} (1 + \delta_j L_t^j)  \quad \quad \quad
\end{align*}
\begin{align*}
 \notag & P^{Cpl}_1(t; T_{k-1}, T_k,  K) \\
& :=  B_t(T_n)  \delta_k \left\{\frac{1}{6}  \sum_{i_1,i_2, i_3 =k}^n L_t^{i_1} L_t^{i_2} L_t^{i_3}
\frac{\partial^3 u_0(t,x)}{\partial x_{i_1}\partial x_{i_2}\partial
  x_{i_3}}\Big|_{x=L_t} \int_t^{T_{k-1}}
M^3_s(\lambda^{i_1},\lambda^{i_2},\lambda^{i_3}) ds \right. \\
&\quad \quad  -  \sum_{j=k}^n \sum_{j_0=j+1}^n \sum_{j_1=j_0+1}^n    \frac{\delta_{j_0}
  L_t^{j_0}}{1+ \delta_{j_0} L_t^{j_0}}  \frac{\delta_{j_1}
  L_t^{j_1}}{1+ \delta_{j_1} L_t^{j_1}} 
  L_t^j \frac{\partial u_0(t,x)}{\partial x_j}\Big|_{x=L_t} \\
  & \left. \quad \quad \quad \quad \quad \cdot \int_t^{T_{k-1}}
M^3_s(\lambda^{j},\lambda^{j_0},\lambda^{j_1})  ds  \right\}
\end{align*}
\begin{align*}
\notag&  P^{Cpl}_2(t; T_{k-1}, T_k,  K)     \\
& := B_t(T_n)  \delta_k \left\{ \frac{1}{6} \sum_{i_1, i_2, i_3 =k}^n L_t^{i_1} L_t^{i_2} L_t^{i_3} \int_t^{T_{k-1}} ds 
M^3_s(\lambda^{i_1}, \lambda^{i_2}, \lambda^{i_3}) \right. \\
& \quad \quad  \cdot \left[ \frac{1}{6} \sum_{i_4, i_5, i_6 =k}^n \left( \int_s^{T_{k-1}} M^3_v(\lambda^{i_4}, \lambda^{i_5}, \lambda^{i_6}) dv \right)  \right.    \frac{\partial^3 v^{i_4, i_5, i_6}(t, x)}{\partial x_{i_1}\partial x_{i_2}\partial x_{i_3}} \Big|_{x=L_t} \\
& \left. \quad \quad \quad -  \sum_{ j_4=k}^n  \sum_{j_5 = j_4 +1}^n \sum_{j_6 = j_5 +1}^n  
\left( \int_s^{T_{k-1}} M^3_v(\lambda^{j_4}, \lambda^{j_5}, \lambda^{j_6}) dv \right) 
  \frac{\partial^3 \bar v^{j_4, j_5, j_6}(t, x)}{\partial x_{i_1}\partial x_{i_2}\partial x_{i_3}} \Big|_{x=L_t} \right]  
  \end{align*}
  \begin{align*}
  & -  \sum_{j = k}^n \sum_{j_0 = j +1}^n \sum_{j_1 = j_0 +1}^n \frac{\delta_{j_0} L_t^{j_0}}{(1+ \delta_{j_0} L_t^{j_0}}  \frac{\delta_{j_1} L_t^{j_1}}{(1+ \delta_{j_1} L_t^{j_1}} L_t^{j}  
 \int_t^{T_{k-1}} ds M_s(\lambda^{j}, \lambda^{j_0}, \lambda^{j_1})  \\
& \quad \quad \cdot \left[ \frac{1}{6} \sum_{i_4, i_5, i_6 =k}^n \left( \int_s^{T_{k-1}} M^3_v(\lambda^{i_4}, \lambda^{i_5}, \lambda^{i_6}) dv \right) \right.   \frac{\partial v^{i_4, i_5, i_6}(t, x)}{\partial x_{j}} \Big|_{x=L_t}\\ 
& \quad \quad  \quad \left. - \sum_{ j_4 =k}^{n}  \sum_{j_5 = j_4 +1}^n \sum_{j_6 = j_5 +1}^n 
\left( \int_s^{T_{k-1}} M^3_v(\lambda^{j_4}, \lambda^{j_5}, \lambda^{j_6}) dv \right) 
 \frac{\partial^3 \bar v^{j_4, j_5, j_6}(t, x)}{\partial x_{j}}\Big|_{x=L_t} \right] \\ \\
\notag  & + \frac{1}{24}  \sum_{i_1,i_2, i_3, i_4 =k}^n L_t^{i_1} L_t^{i_2} L_t^{i_3} L_t^{i_4}
\frac{\partial^4 u_0(t,x)}{\partial x_{i_1}\partial x_{i_2}\partial
  x_{i_3}\partial x_{i_4}} \Big|_{x=L_t}  
  \int_t^{T_{k-1}} M^4_s(\lambda^{i_1},\lambda^{i_2},\lambda^{i_3}, \lambda^{i_4}) ds\\ \\
\notag & -  \sum_{j=k}^n \sum_{j_0=j+1}^n \sum_{j_1=j_0+1}^n \sum_{j_2=j_1+1}^n   \frac{\delta_{j_0}
  L_t^{j_0}}{1+ \delta_{j_0} L_t^{j_0}}  \frac{\delta_{j_1}
  L_t^{j_1}}{1+ \delta_{j_1} L_t^{j_1}} \frac{\delta_{j_2}
  L_t^{j_2}}{1+ \delta_{j_2} L_t^{j_2}}
  L_t^j \frac{\partial u_0(t,x)}{\partial x_j}\Big|_{x=L_t} \\
  & \left. \quad \quad \quad \cdot \int_t^{T_{k-1}}
M^4_s(\lambda^{j},\lambda^{j_0},\lambda^{j_1}, \lambda^{j})  ds  \right\}
\end{align*}
with $V^{Cpl}_{t,T_{k-1}}$ given by \eqref{eq:caplet-variance}, $u_0(t, x)$ by  \eqref{eq:caplet-zero-order}, the terms $M_s^3(\cdot)$ and $M^4_s(\cdot)$ by \eqref{eq:moments-Levy-measure} and 
$v^{i_4, i_5, i_6}(t, x)$ and  $\bar v^{j_4, j_5, j_6}(t, x)$ by \eqref{eq:v} and  \eqref{eq:v-bar}, respectively.
\end{proposition}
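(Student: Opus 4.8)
The plan is to treat the caplet as the special case of the general European payoff of Proposition~\ref{p:general-payoff} obtained by taking, for the option written on $L^k$ with maturity $T_{k-1}$, the transformed terminal condition $\tilde g(x)=(x_k-K)^+\prod_{j=k+1}^{n}(1+\delta_j x_j)$; the pricing PIDE \eqref{eq:caplet-PIDE} is then precisely \eqref{eq:general-PIDE} for this $\tilde g$, with the role of $T_k$ there played by the caplet maturity $T_{k-1}$. Consequently the whole apparatus of Section~\ref{asympt-general-payoff} --- the expansion \eqref{eq:expansion-u}, the cascade of PIDEs for $u_0,u_1,u_2$, and the two preceding lemmas producing $\tilde u_1,\tilde u_2$ --- applies verbatim, and the only genuine work is to (i) identify the zero-order term $u_0$ in closed form and (ii) record how the index ranges shrink. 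For (i) I would argue as follows: since $\tilde g$ factorizes into $(x_k-K)^+$ times $\prod_{j>k}(1+\delta_j x_j)$, and $\prod_{j>k}(1+\delta_j X^{t,x,j}_s)$ is, up to the constant $\prod_{j>k}(1+\delta_j x_j)$, the $\Q^{T_n}$-density process of the forward measure $\Q^{T_k}$ (cf.\ \eqref{eq:forward-measure}) along the log-normal LMM diffusion \eqref{eq:LMM-diffusion}, the Feynman--Kac representation \eqref{eq:u0} becomes, after the associated change of measure to $\Q^{T_k}$, the product $\prod_{j>k}(1+\delta_j x_j)\,E^{\Q^{T_k}}[(L^k_{T_{k-1}}-K)^+\mid L^k_t=x_k]$. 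Under the log-normal LMM, $L^k$ is a $\Q^{T_k}$-martingale which is log-normal with total variance $V^{Cpl}_{t,T_{k-1}}=\int_t^{T_{k-1}}\Sigma_{kk}(s)\,ds$, so this conditional expectation equals $P_{BS}(V^{Cpl}_{t,T_{k-1}},x_k,K)$, which is exactly \eqref{eq:caplet-zero-order}.

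For the higher-order terms I would use the Feynman--Kac representations \eqref{eq:u1} and \eqref{eq:u2} (with $T_k$ replaced throughout by $T_{k-1}$) and then apply Proposition~\ref{p:martingale} together with the freezing simplification \eqref{eq:frozen-drift}, exactly as in the proofs of the two lemmas. The feature peculiar to the caplet is that $u_0$ in \eqref{eq:caplet-zero-order} depends on $x$ only through the coordinates $x_k,\dots,x_n$; hence $\partial u_0/\partial x_i\equiv 0$ for $i<k$, and since every summand in \eqref{eq:u1} carries at least one $x$-derivative of $u_0$, the first-order term $u_1$ likewise depends only on $x_k,\dots,x_n$. Therefore all the index sums in \eqref{eq:u1} and \eqref{eq:u2} may be restricted to range from $k$ to $n$ --- which is precisely the difference between the displayed formulas for $P^{Cpl}_1,P^{Cpl}_2$ and their general-payoff analogues $\tilde u_1,\tilde u_2$. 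Multiplying through by the factor $B_t(T_n)\delta_k$ from \eqref{eq:caplet-price} and collecting powers of $\alpha$ then gives the stated expansion.

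The genuinely hard part, exactly as for Proposition~\ref{p:general-payoff}, is not the algebra but the justification of the expansion itself, i.e.\ that $u^\alpha=u_0+\alpha u_1+\alpha^2 u_2+O(\alpha^3)$ with a remainder that is uniform in $(t,x)$ on compact sets. The payoff $(x_k-K)^+$ is only Lipschitz, whereas the formulas for $P^{Cpl}_1$ and $P^{Cpl}_2$ contain third- and fourth-order $x$-derivatives of $u_0$, which blow up like negative powers of $T_{k-1}-t$ as $t\uparrow T_{k-1}$; one must therefore exploit the nondegeneracy of $\Sigma_{kk}$ (uniform ellipticity of $\mathcal A^0_t$ in the $x_k$-direction) to supply the parabolic smoothing that makes these quantities integrable in $s$ over $[t,T_{k-1}]$, and one needs uniform-in-$\alpha$ a priori bounds on the solutions of the perturbed PIDEs and their derivatives to control the tail $\sum_{p\geq3}\alpha^p u_p$. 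This is the analytic gap acknowledged in the footnote to \eqref{eq:general-PIDE}, and a fully rigorous argument would proceed along the lines of \citeN{MenasseTankov15}. A minor additional bookkeeping point is that the freezing \eqref{eq:frozen-drift} is applied only inside the drift contributions $b^j_1,b^j_2$ and not to the pure-jump moment terms $M^3_s,M^4_s$, so one has to keep careful track of which occurrences of $\frac{\delta_j x_j}{1+\delta_j x_j}$ are being frozen.
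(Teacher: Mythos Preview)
Your proposal is correct and follows essentially the same route as the paper: specialize the general-payoff machinery of Section~\ref{asympt-general-payoff} to the caplet terminal condition, identify $u_0$ via the measure change to $\Q^{T_k}$ and Black's formula, observe that $u_0$ depends only on $x_k,\dots,x_n$ so the index sums in \eqref{eq:u1}--\eqref{eq:u2} collapse to start at $k$, and then invoke Proposition~\ref{p:martingale} together with the freezing \eqref{eq:frozen-drift} exactly as in the two lemmas. Your added discussion of the analytic gap (regularity of $u^\alpha$, control of the $O(\alpha^3)$ remainder, parabolic smoothing for the Lipschitz payoff) is accurate and in fact more careful than the paper, which simply defers these issues to the footnote and to \citeN{MenasseTankov15}.
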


\begin{remark}
Recalling that 
$$
u_0(t, x)  = P_{BS}(V^{Cpl}_{t,T}, x_k, K)  \prod_{j=k+1}^{n} (1 + \delta_j x_j)
$$
we see that the functions $v$ and $\bar v$ given by 
\begin{align*}
v^{i, j, l}(t, x) & := x_i x_j x_l \frac{\partial^3 u_0(t, x)}{\partial x_{i}\partial x_{j}\partial x_{l}}  
\end{align*}
for all $i, j, l =k, \ldots, n$
and 
\begin{align*}
\bar v^{i, j, l}(t, x) & := x_i \frac{\delta_j x_j}{1+ \delta_j x_j}  \frac{\delta_l x_l}{1+ \delta_l x_l} \frac{\partial u_0(t, x)}{\partial x_{i}} 
\end{align*}
for all $i = k, \ldots, n$, $j=i+1, \ldots, n$ and  $l=j+1, \ldots, n$, 
become in fact linear combinations of the terms which are polynomials in $x$ multiplied by derivatives of $P_{BS}(\cdot) $ up to order three.
\end{remark}

\subsection{Approximate pricing of swaptions}

Let us consider a swaption defined in Section \ref{s:swaptions}. For swaption pricing we again use the general result under the terminal measure $\Q^{T_n}$ given in Proposition \ref{p:general-payoff}. 
The price of the swaption $P^{Swn}(t; T_0, T_n, K)$ then satisfies
\begin{align*}
P^{Swn}(t; T_0, T_n, K) &= B_t(T_n) (u_0 (t,L_t) + \alpha u_1(t,L_t) + \alpha^2 u_2(t,L_t) ) +
O(\alpha^3)\\ &=:P_0^{Swn}(t; T_0, T_n, K)+\alpha P_1^{Swn}(t; T_0, T_n, K) \\
& \quad +\alpha^2 P_2^{Swn}(t; T_0, T_n, K) + O(\alpha^3),
\end{align*}
where the function $u_0$ satisfies the equation
$$
\partial_t u_0+ \mathcal A^{0}_t u_0 = 0,\qquad u_0(T_0,x) = \tilde g(x)
$$
with $\tilde g(x) = \delta_n f_n(x)^{-1}\left(\sum_{j=1}^n f_j(x)x_j - K\right)^+$. We see
that the zero-order term $P_0^{Swn}(t; T_0, T_n, K)$ corresponds to the price of the swaption in
the log-normal LMM model with volatility matrix $\Sigma(t)$.  

The function $u_0$ related to the swaption price in the
log-normal LMM is of course not known in explicit form but one can use
various approximations developed in the literature
\cite{jackel2003link,schoenmakers2005robust}. To introduce the
approximation of \cite{jackel2003link}, we compute the quadratic
variation of the log swap rate expressed as function of Libor rates: 
$$
R(t; T_0, T_n) = R(L^1_t,\dots,L^n_t) =  \frac{\sum_{j=1}^n \delta_j L^j_t \prod_{k=1}^j (1+\delta_k
  L^k_t )}{\sum_{j=1}^n \delta_j \prod_{k=1}^j (1+\delta_k
  L^k_t )}.
$$
\begin{align*}
\langle \log R(\cdot; T_0, T_n)  \rangle_T &= \int_0^T \frac{d\langle
  R(\cdot; T_0, T_n)  \rangle_t }{R(t; T_0, T_n)^2} = \int_0^T
\sum_{i,j=1}^n \frac{\partial R(L_t)}{\partial L^i} \frac{\partial
  R(L_t)}{\partial L^j} \frac{d\langle L^i, L^j\rangle_t}{R(t; T_0,
  T_n)^2}\\
& = \int_0^T
\sum_{i,j=1}^n \frac{\partial R(L_t)}{\partial L^i} \frac{\partial
  R(L_t)}{\partial L^j} \frac{L^i_t L^j_t \Sigma_{ij}(t) dt}{R(t; T_0,
  T_n)^2}
\end{align*}
The approximation of \cite{jackel2003link} consists in replacing all
stochastic processes in the above integral by their values at time
$0$; in other words, the swap rate becomes a log-normal random
variable such that $\log R(t; T_0, T_n) $ has variance 
$$
V^{swap}_T = \sum_{i,j=1}^n \frac{\partial R(L_0)}{\partial L^i} \frac{\partial
  R(L_0)}{\partial L^j} \frac{L^i_0 L^j_0 }{R(0; T_0,
  T_n)^2}\int_0^T \Sigma_{ij}(t) dt.
$$
The function $u_0(0,x)$ can then be approximated by applying the
Black-Scholes formula:
$$
u_0(0,x) \approx P_{BS}(V^{swap}_T, R(0; T_0, T_n), K).
$$

\section{Numerical examples}\label{numerics}

In this section, we test the performance of our approximation at pricing caplets on  Libor rates in the model  \eqref{eq:Libor-terminal-measure}, where $X_t$ is a unidimensional CGMY process \cite{cgmy.07}. The CGMY process is a pure jump process, so that $c=0$, with Lévy measure 
\[
F(dz) = \frac{C}{|z|^{1+Y}} \left( e^{-\lambda_- z} \indik_{\{x<0\}} + e^{-\lambda_+ z} \indik_{\{x>0\}} \right) \,dz\:.
\] 
The jumps of this process are not bouded from below but the
parameters we choose ensure that the probability of having a negative Libor
rate value is negligible. 
We choose the time grid $T_0=5$, $T_1=6$, ... $T_5=10$, the volatility
parameters $\lambda_i = 1$, $i=1,...,5$, the initial forward Libor
rates $L_0^i=0.06$, $i=1,...,5$ and the bond price for the first
maturity $B_0(T_0) = 1.06^{-5}$. The CGMY model parameters are chosen
according to four different cases described in the following
table, which also gives the standard deviation and excess kurtosis of $X_1$ for each case.
Case 1 corresponds to a Lévy process
that is close to the Brownian motion ($Y$ close to $2$ and
$\lambda_+$ and $\lambda_-$ large) and Case 4 is  a
Lévy process that is very far from Brownian motion. 

\medskip

%\begin{table}
%\label{CGMY-cases}
\centerline{
\begin{tabular}{l|llllcc}
Case & $C$ & $\lambda_+$ & $\lambda_-$&  $Y$ & Volatility &
Excess kurtosis\\\hline
1 & 0.01  & 10  & 20  & 1.8 & 23.2\% & 0.028\\
2 & 0.1  & 10  &  20  &  1.2 & 17\% &  0.36\\
3 & 0.2  & 10  & 20  &  0.5 &  8.7\% & 3.97\\
4 & 0.2  &  3  &  5  &  0.2 & 18.9\% & 12.7
\end{tabular}
}
%\caption{CGMY process: standard deviation and excess kurtosis for four different parameter sets}
%\end{table}
\medskip

We first calculate the price of the ATM caplet with maturity $T_1$ written on the Libor
rate $L^1$ with the zero-order, first-order and second-order
approximation, using as benchmark the jump-adapted Euler scheme of
\citeN{kohatsu2010jump}. The first Libor rate is chosen to maximize the
nonlinear effects related to the drift of the Libor rates, since the
first maturity is the farthest from the terminal date. The results are
shown in Table \ref{atm.tab}. We see that for all four cases, the
price computed by second-order approximation is within or at the
boundary of the Monte Carlo confidence interval, which is itself quite
narrow (computed with $10^6$ trajectories).

Secondly, we evaluate the prices of caplets with strikes ranging from
$3\%$ to $9\%$ and explore the performance of our analytic
approximation for estimating the caplet implied volatility smile. The
results are shown in Figure \ref{smile.fig}. We see that in cases 1,  2
and 3, which correspond to the parameter values most relevant in
practice given the value of the excess kurtosis, the second order
approximation reproduces the volatility smile quite well (in case 1
there is actually no smile, see the scale on the $Y$ axis of the graph). In case 4,
which corresponds to very violent jumps and pronounced smile, the qualitative
shape of the smile is correctly reproduced, but the actual values are
often outside the Monte Carlo interval. This means that in this
extreme case the model is
too far from the Gaussian LMM for our approximation to be precise. We also note that the algorithm runs in $\mathcal{O}(n^6)$, for the second order approximation, due to the number of partial derivatives that one has to calculate. The algorithm may therefore run slowly, should $n$ become too large.

\begin{table}
\begin{tabular}{l|llll}
& Case 1 & Case 2 & Case 3 & Case 4\\\hline
Order 0 & 0.008684 & 0.006392 & 0.003281  & 0.007112\\
Order 1 & 0.008677& 0.006361 & 0.003241 & 0.006799\\
Order 2 & \textbf{0.008677}&\textbf{0.006351} & \textbf{0.003172}  &\textbf{0.006556}\\
MC lower bound &0.008626 &0.006306&0.003178& 0.006493 \\
MC upper bound & 0.008712 & 0.006361&0.003204 &0.006578
\end{tabular}

\medskip

\caption{Price of ATM caplet computed using the analytic approximation
  together with the 95\% confidence
  bounds computed by Monte Carlo over $10^6$ trajectories.}
\label{atm.tab}
\end{table}

\begin{figure}
\includegraphics[width=0.5\textwidth]{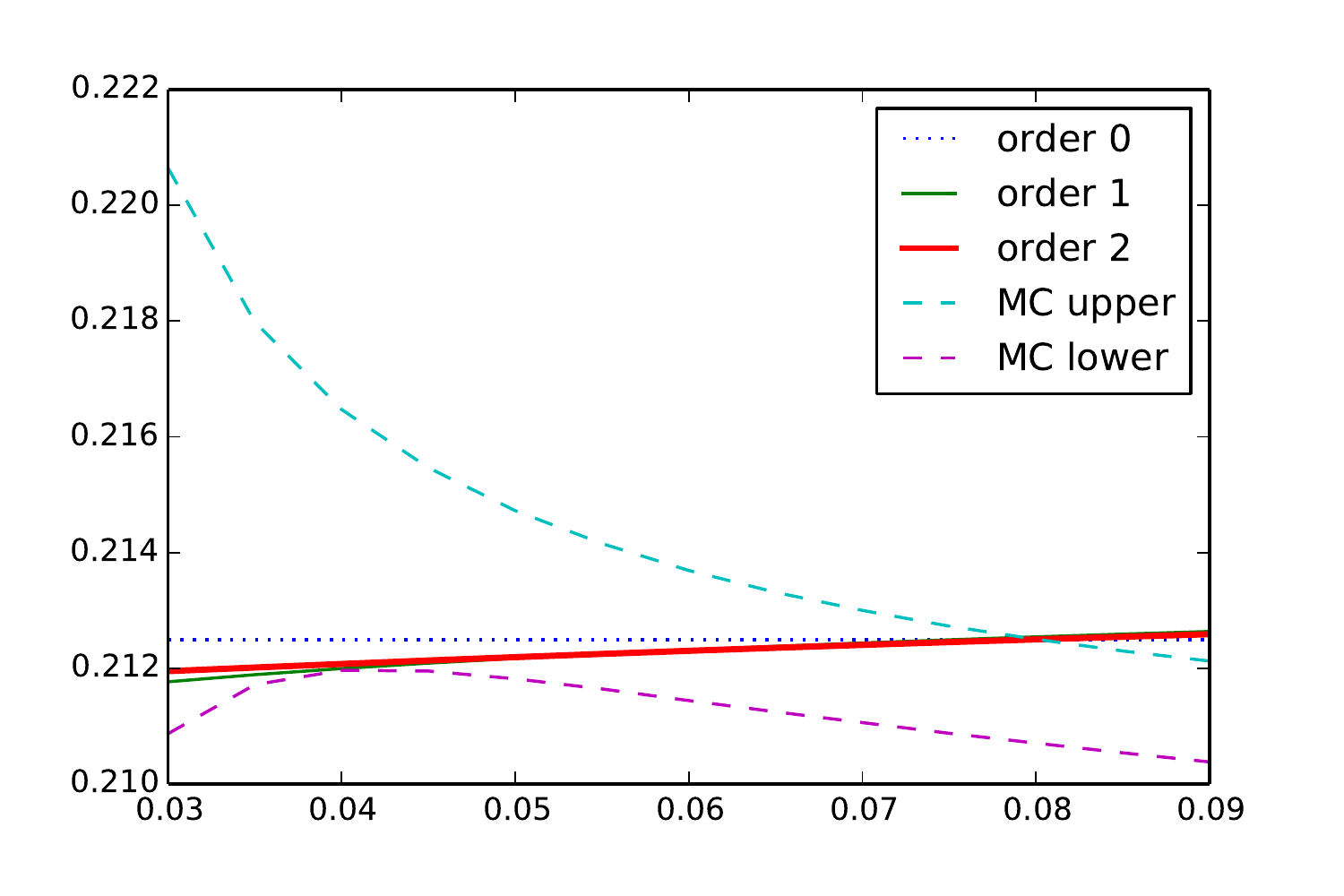}\includegraphics[width=0.5\textwidth]{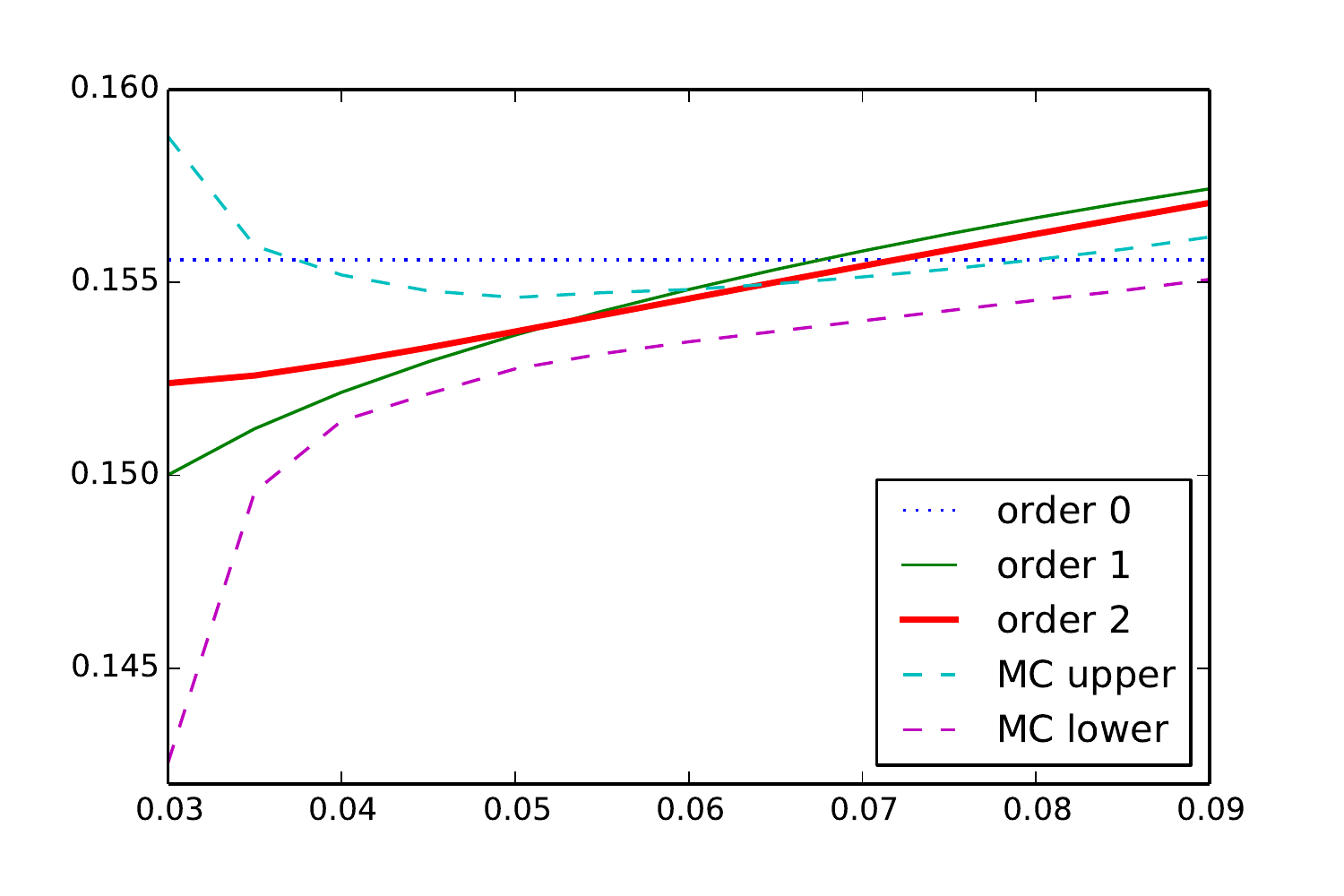}
\\
\includegraphics[width=0.5\textwidth]{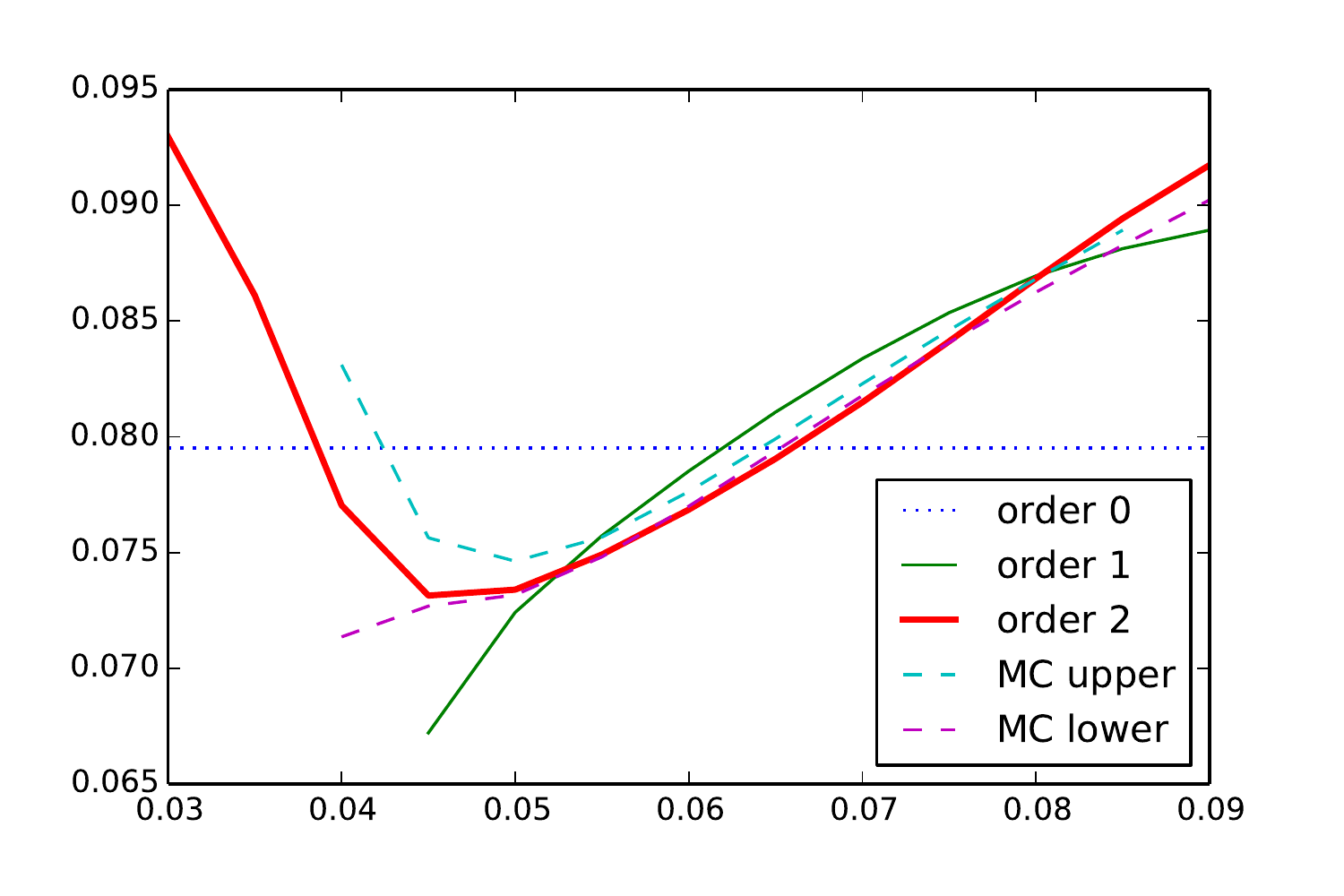}\includegraphics[width=0.5\textwidth]{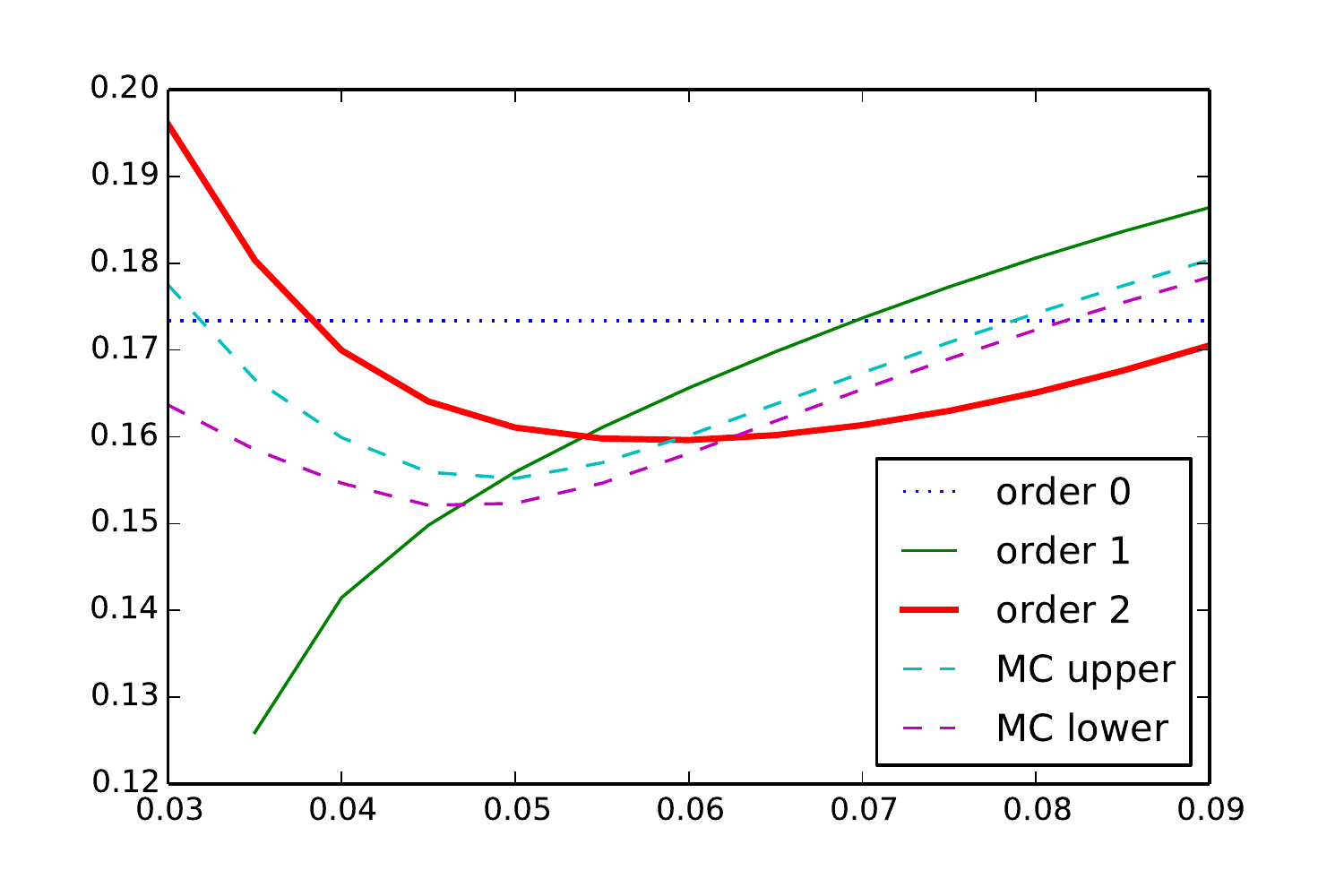}
\caption{Implied volatilities of caplets with different strikes
  computed using the analytic approximation together with the Monte
  Carlo bound. Top graphs: Case 1 (left) and Case 2 (right). Bottom
  graphs: Case 3 (left) and Case 4 (right). }
\label{smile.fig}
\end{figure}

\section{Acknowledgement} This research of Peter Tankov is partly supported
by the Chair Financial Risks of the Risk Foundation sponsored by
Société Générale.

\end{document}